\documentclass[12pt,journal,onecolumn]{IEEEtran}
%\documentclass[journal,onecolumn]{IEEEtran}
%
% If IEEEtran.cls has not been installed into the LaTeX system files,
% manually specify the path to it like:
% \documentclass[journal]{../sty/IEEEtran}
\usepackage{amsthm}
\usepackage{arydshln}
\usepackage{amssymb}
\usepackage{amsmath}
\usepackage{longtable}
\newtheorem{theorem}{Theorem}[section]
\newtheorem{lemma}[theorem]{Lemma}

\newtheorem{corollary}[theorem]{Corollary}
\newtheorem{example}{Example}

\allowdisplaybreaks
\usepackage[justification=centering]{caption}

\usepackage[para]{threeparttable}

\hyphenation{op-tical net-works semi-conduc-tor}

\newcommand{\tr}{{\mathrm{Tr}}}

\newcommand{\gf}{{\mathbb{F}}}

\newcommand{\opt}{{\text{opt}}}

\usepackage{lineno}
\usepackage{blindtext}
\usepackage{threeparttable}
 \usepackage{graphicx}
 \usepackage{arydshln}
\allowdisplaybreaks[4]
%\newenvironment{remark}[1]
%[Remark]{\begin{trivlist}
%\item[\hskip \labelsep {\bfseries #1}]}{\end{trivlist}}

\makeatletter

\newcommand{\Rmnum}[1]{\expandafter\@slowromancap\romannumeral #1@}
\makeatother

\ifCLASSINFOpdf

\else

\fi

\hyphenation{op-tical net-works semi-conduc-tor}

\begin{document}
%
% paper title
% can use linebreaks \\ within to get better formatting as desired
\title{Large Sets of Quasi-Complementary Sequences\\ From Polynomials over Finite Fields and\\ Gaussian Sums
\thanks{
Z. Heng's research was supported in part by the National Natural Science Foundation of China under Grant 12271059, in part by the Shaanxi Fundamental
 Science Research Project for Mathematics and Physics (Grant No. 23JSZ008), in part by the open research fund of National Mobile Communications 
 Research Laboratory of Southeast University under Grant 2024D10 and in part by the Research Funds for the Interdisciplinary Projects, CHU, under Grant 300104240922. C. Xie's research was supported by the Shaanxi Provincial Key Research and Development Program under Grant 2023-YBGY-042. H. Zhou's research was supported by the National Natural Science Foundation of China under Grant 12471493.
}}

\author{Ziling Heng, Peng Wang, Chunlei Xie and Haiyan Zhou \thanks{Z. Heng and P. Wang are with the School of Science, Chang'an University, Xi'an 710064, China,  and also with National Mobile Communications Research Laboratory, Southeast University, Nanjing 211111, China (email: zilingheng@chd.edu.cn, wp20201115@163.com); C. Xie is with the School of Information Engineering, Chang'an University, Xi'an 710064, China (email: chunleixie@chd.edu.cn); H. Zhou is with the School of Mathematical Sciences and Institute of Mathematics, Nanjing Normal University, Nanjing 210023, China (email: 05366@njnu.edu.cn). }}

\date{\today}
\maketitle

\begin{abstract}
Perfect complementary sequence sets (PCSSs)  are widely used in multi-carrier code-division multiple-access (MC-CDMA) communication systems.  However, the set size of a PCSS is upper bounded by the number of row sequences of each two-dimensional matrix in the PCSS.  Then quasi-complementary sequence sets (QCSSs) were proposed to support more users in MC-CDMA communications. For practical applications, it is desirable to construct an $(M,K,N,\vartheta_{\max})$-QCSS with $M$ as large as possible and $\vartheta_{max}$ as small as possible, 
where $M$ is the number of matrices with $K$ rows and $N$ columns in the set and $\vartheta_{\max}$ denotes its periodic tolerance.
There exists a tradeoff among these parameters. Constructing QCSSs achieving or nearly achieving the  known correlation lower bound has been an interesting research topic. 
Up to now, only a few constructions of asymptotically optimal or near-optimal periodic QCSSs have been reported in the literature. In this paper, based on polynomials over finite fields and Gaussian sums, we construct five new families of  asymptotically optimal or near-optimal periodic QCSSs with large set sizes and low periodic tolerances. These families of QCSSs have set size $\Theta(K^2)$ or $\Theta(K^3)$ and flock size $K$. To the best of our knowledge, only a small amount of known families of periodic QCSSs with set size $\Theta(K^2)$ have been constructed and most of other known periodic QCSSs have set sizes much smaller than $\Theta(K^2)$.
Our new constructed periodic QCSSs with set size $\Theta(K^2)$ and flock size $K$ have the best parameters among all known ones. They have larger set sizes or lower periodic tolerances.
The periodic QCSSs with set size $\Theta(K^3)$ and flock size $K$ constructed in this paper have the largest set size among all known families of asymptotically optimal or near-optimal periodic QCSSs.
\end{abstract}

\begin{IEEEkeywords}
Quasi-complementary sequence sets, polynomials over finite fields, Gaussian sums
\end{IEEEkeywords}

\maketitle{}

% For peer review papers, you can put extra information on the cover
% page as needed:
% \ifCLASSOPTIONpeerreview
% \begin{center} \bfseries EDICS Category: 3-BBND \end{center}
% \fi
%
% For peerreview papers, this IEEEtran command inserts a page break and
% creates the second title. It will be ignored for other modes.
\IEEEpeerreviewmaketitle

\section{Introduction}\label{sec1}
A perfect complementary sequence set (PCSS), which is also called mutually orthogonal complementary sequence set, consists of a number of two-dimensional matrices. The nontrivial auto-correlation and cross-correlation sums of the row sequences of the matrices are zero for any nonzero time-shift \cite{TL}.  
PCSSs have found many nice applications including peak-to-average power ration reduction \cite{D}, inter-symbol interference channel estimation \cite{S}, radar waveform design \cite{P} and  multi-carrier code-division multiple-access (MC-CDMA) communication systems \cite{C1}. The application of PCSSs has been viewed as a very promising technology for the next generation CDMA \cite{C2}. In an MC-CDMA communication system, each user is assigned a two-dimensional matrix and all of its row sequences are transmitted over separate sub-carrier channels simultaneously \cite{C1}.  Then the number of the row sequences of a matrix in the PCSS and that of sub-carrier channels are equal. 
However, the set size of a PCSS can not be larger than the number of row sequences of each two-dimensional matrix in it by the correlation lower bound.
In order to support more users in MC-CDMA communication systems, some researchers proposed two new kinds of complementary sequence sets which are zero-correlation zone complementary sequence sets (ZCZ-CSSs) \cite{F, L} and  low-correlation zone complementary sequence sets (LCZ-CSS) \cite{Liu}. An LCZ-CSS (resp. ZCZ-CSS) has low (resp. zero) correlation magnitudes within a zone around the zero-shift. In recent years, many researches have been done on constructing LCZ-CSSs and ZCZ-CSSs.

In 2013, Liu et al. proposed another type of complementary sequence sets, i.e. the quasi-complementary sequence sets (QCSSs),  which could have larger set sizes than those of PCSSs \cite{ZL}.
In the literature, there are two kinds of QCSSs including periodic QCSSs and aperiodic QCSSs \cite{ZL, Liu2}. In this paper, we mainly study periodic QCSSs. 
A periodic $(M, K, N, \vartheta_{\max})$-QCSS over an alphabet $A$ is a set consisting of $M$ two-dimensional $K\times N$ matrices 
 such that the maximum value $\vartheta_{\max}$ of nontrivial periodic correlation sums of the row sequences is a small positive number.   
 Each row of the two-dimensional matrices is called a constituent sequence of length $N$ and $K$ is called the flock size. 
The maximum non-trivial periodic correlation sum $\vartheta_{\max}$ measures the maximum possible amount of the multipath interference and the multiuser interference. 
For the application of a periodic QCSSs in MC-CDMA communications, it is required that its set size $M$ is as large as possible and $\vartheta_{\max}$ is as small as possible.
However, there exists a tradeoff among the parameters $M, K, N$ and $\vartheta_{\max}$. 
In \cite{ZL}, Liu et al. presented a lower bound for the maximum periodic correlation
sum of a periodic $(M, K ,N, \vartheta_{\max})$-QCSS as follows:
\begin{eqnarray}\label{wlech}
\vartheta_{\max}\geq\vartheta_{\opt}=KN\sqrt{\frac{M/K-1}{MN-1}}.
\end{eqnarray}
It is easy to verify that $M\leq K$, which is the upper bound of the size of a periodic PCSS, provided that $\vartheta_{\max}=0$. 
The tightness factor of the above correlation lower bound for a periodic QCSS is defined by $\rho=\vartheta_{\max}/\vartheta_{\opt}$. 
Note that $\rho\geq1$. A QCSS is said to be \emph{optimal} if $\rho=1$ and\emph{ near-optimal} if $1<\rho\leq2$ \cite{ZL}.
Constructing optimal or near-optimal periodic QCSSs has been an interesting research topic in recent years. However, up to now, no infinite family of optimal QCSSs achieving the lower bound in (\ref{wlech}) was reported. Only a few families of asymptotically optimal or near-optimal QCSSs have been constructed. 
In \cite{ZL}, Liu et al. constructed the first family of asymptotically optimal and the first family of asymptotically near-optimal periodic QCSSs from the Single difference sets and the existing optimal quaternary sequence sets. Using a similar idea, Li et al. constructed a family of asymptotically optimal periodic QCSSs from almost difference sets in \cite{LY4}. 
In \cite{LY1}, Li et al. generalized the constructions in \cite{ZL} and derived new asymptotically optimal periodic QCSSs. 
In \cite{LY2, LY3}, Li et al. presented several constructions of asymptotically optimal QCSSs from additive and multiplicative characters over finite fields. 
Later, Luo et al. gave three new constructions of asymptotically optimal QCSSs with small alphabet sizes by additive characters over finite fields \cite{LG}. 
Recently, Xiao et al. used additive characters of finite fields to construct two more families of asymptotically optimal periodic QCSSs \cite{XLC}.
The parameters of these known families of periodic QCSSs are summarized in Table \ref{tab1}.  It is shown that only a small amount of known families of periodic QCSSs with set size $\Theta(K^2)$ and flock size $K$ have been constructed and most of other known periodic QCSSs have set sizes much smaller than $\Theta(K^2)$.
The purpose of this paper is to use new approaches to construct five new families of  asymptotically optimal or near optimal QCSSs with better parameters.
The main contents of this paper are as follows:
\begin{enumerate}
\item[$\bullet$] Using a family of quadratic polynomials over finite fields, we construct a family of $(q^2,q,q-1,q)$-QCSSs with alphabet size $p$ and a family of $(2^{2n},2^n,2^{n}-1,2^n)$-QCSSs with alphabet size $2$, where $q=p^n$ with $p$ an odd prime and $n$ a positive integer. These two families of QCSSs are asymptotically optimal  with respect to the correlation lower bound and have larger set sizes than those in \cite{LY2, XLC} by Table \ref{tab1}. 
\item[$\bullet$] A family of $(2^{3n},2^n,2^n-1,2^{n+1})$-QCSSs with alphabet size $2$ is constructed from a special quadrinomial over finite fields with characteristic 2. 
A family of $(p^{3n},p^n,p^n-1, 2p^n)$-QCSSs with alphabet size $p$ is also constructed from a cubic quadrinomial over finite fields with characteristic $p$ for an odd prime $p$. 
These two families of QCSSs are asymptotically near-optimal with respect to the correlation lower bound. Note that the set size of each of them is the cube of the flock size. In this sense, these two families of QCSSs have the largest set sizes among all known families of asymptotically optimal or near-optimal periodic QCSSs.
\item[$\bullet$] By Gaussian sums, we construct a family of  $(p^n(p^n-1),p^n-1,p^n-1,p^n)$-QCSSs which have  lower periodic tolerance than the known  $(p^n(p^n-1),p^n-1,p^n-1,p^n+1)$-QCSSs
in \cite{XLC}, though our QCSSs have larger alphabet size. 
\end{enumerate}
According to the parameters of all known periodic QCSSs listed in Table \ref{tab1}, our QCSSs have new parameters and very large set sizes. Most of them have very small alphabet size.
These new QCSSs may be used in MC-CDMA communication systems to support a large number of users. 
\begin{table}[h!]
\begin{center}
\scriptsize{\begin{threeparttable}
\caption{The parameters of known periodic QCSSs.}\label{tab1}
\begin{tabular}{l|l|l|l|l|l|l}
\hline 
Set size $M$& Flock size $K$& Length $N$& $\vartheta_{\max}$ & Alphabet  Size& Constraint& References\\
\hline 
$2^n$ & $2^{n-1}-1$ & $2^n-1$& $\frac{2^n+2^\frac{n}{2}}{2}$ & $4(2^n-1)$&$n>1$ &\cite{ZL} \\
\hline 
$2^n$&$2^{n-1}-1$&$2^{n+1}-2$&$2^n+2^{n/2}$&$4(2^n-1)$& $n>1$&\cite{ZL} \\
\hline 
$2^{2n}$&$2^{n-1}-2^{n/2}$&$2^{2n}-1$&$\leq w$&$4(2^n-1)$&$n\geq 3$&\cite{LY4}\\
\hline 
$p$ & $\frac{p-1}{2}$ &$p$ & $\leq \frac{p+\sqrt{p}}{2}$& $p$&$p\equiv1 \pmod{4}$ is a prime &\cite{LY1} \\
\hline 
$p^n-2$ & $\frac{p^n-1}{2}$ & $p^n-1$ &$\leq \frac{p^n+4 \sqrt{p^n}+3}{2}$ &$p^n-1$ & $p$ is an odd prime&\cite{LY2}\\
\hline 
$p^{2n}-2$ & $p^n$&$p^{2n}-1$ &$p^n(p^{2n}+3)$ & $p^{2n}-1$&$p$ is a prime&\cite{LY2}\\
\hline 
$p^n-1$ &   $\frac{p^n-1}{2}$&$p^n-1$ & $\leq \frac{p^n+\sqrt{p^n}}{2}$&$p(p^n-1)$& $p$ is an odd prime&\cite{LY3}\\
 \hline 
$p^n-1$ & $p^{n-1}$&$p^n-1$ &$\leq p^{n-\frac{1}{2}}$ & $p(p^{n}-1)$&$p$ is a prime and $n>1$ &\cite{LY3}\\
\hline 
$p^{2n}-1$ &$p^n$ & $p^{2n}-1$& $p^{3n/2}$ &$p(p^{2n}-1)$ & $p$ is a prime&\cite{LY3}\\
\hline 
$2^n-1$& $2^{n-1}-1$& $2^n-1$&$2^{n-1}$ & $2(2^n-1)$&$n>1$ &\cite{LY3}\\
\hline 
$2^n-1$&$2^{n-2}$ &$2^n-1$ &$\leq3\cdot2^{n-2}$ &$2(2^n-1)$&$n>1$&\cite{LY3}\\
\hline 
$p^n$&$\frac{p^n-1}{2} $&$p^n-1$&$\frac{p^n+1}{2}$&$p$& $p$ is an odd prime, $p^n>3$,&\cite{LG}\\
\hline 
$p^n$&$\frac{p^n-p^{n-1}}{2}$&$p^n-1$&$\frac{p^n+p^{n-1}}{2}$&$p$&$p$ is an odd prime, $p^n>3$,&\cite{LG}\\
\hline 
$p^n$&$u$&$p^n-1$&$v$&$p$& $p$ is an  odd prime, $n>1$ is odd, &\cite{LG}\\
\hline 
$p^{2n}-p^n$&$p^n$&$p^n-1$&$p^n$&$p$& $p$ is a prime, $n>1$&\cite{XLC}\\
\hline 
$p^{2n}-p^n$&$p^n-1$&$p^n-1$&$p^n+1$&$p$& $p$ is a prime, $n>1$&\cite{XLC}\\
\hline 
$p^{2n}$&$p^n$&$p^n-1$&$p^n$&$p$&$p$ is an odd prime, $n\geq 1$&Theorem \ref{111}\\
\hline 
$2^{2n}$&$2^n$&$2^n-1$&$2^n$&$2$&$n\geq 1$&Theorem \ref{222}\\
\hline 
$2^{3n}$&$2^n$&$2^n-1$&$2^{n+1}$&$2$&$n\geq 3$&Theorem \ref{444}\\
\hline 
$p^{3n}$&$p^n$&$p^n-1$&$ 2p^n$&$p$&$p$ is an odd prime, $n\geq 1$,&Theorem \ref{333}\\
\hline 
$p^{2n}-p^n$&$p^n-1$&$p^n-1$&$p^n$&$p(p^n-1)$&$p$ is a prime&Theorem \ref{555}\\
\hline 
\end{tabular}
\begin{tablenotes}
\footnotesize
\item[]$u=\frac{p^n-p^{n-1}+(-1)^{(p-1)(n+3)/4}(p-1)p^{(n-1)/2}}{2}$, $v=\frac{p^n+p^{n-1}-(-1)^{(p-1)(n+3)/4}(p-1)p^{(n-1)/2}}{2}$,\\$w=(1+2^n)\sqrt{2^n+2^{n-2}-2^{n/2}}$
\end{tablenotes}
\end{threeparttable}}
\end{center}
\end{table}

\section{Preliminaries}\label{sec2}
In this section, we recall some basic definitions of QCSSs and results on characters over finite fields, Gaussian sums, character sums, and the numbers of zeros of some polynomials over finite fields.
\subsection{Periodic quasi-complementary sequence set}
For two complex-valued sequences $\textbf{a}=(a_0,a_1,\cdots,a_{N-1})$ and $\textbf{b}=(b_0,b_1,\cdots,b_{N-1})$ of period $N$, the periodic correlation function between them is defined by
$$R_{\textbf{a},\textbf{b}}(\tau)=\sum_{t=0}^{N-1}a_t\overline{b_{t+\tau}}$$
for $0\leq \tau \leq N-1$, where $(t+\tau)$ is calculated modulo $N$ and $\overline{b_{t+\tau}}$ is the complex conjugation of $b_{t+\tau}$.

Let $\mathbb{C}=\left\{\mathbf{C}^{0},\mathbf{C}^{1},\cdots,\mathbf{C}^{M-1}\right\}$ be a set of $M$ complementary sequences such that each $\mathbf{C}^{m}$ is a  two-dimensional matrix with size $K \times N$ given by
\begin{eqnarray*}
\mathbf{C}^{m}=\left[
\begin{array}{cccc}
\mathbf{c}_{0}^{m}\\
\mathbf{c}_{1}^{m}\\
\vdots\\
\mathbf{c}_{K-1}^{m}\\
\end{array}\right]
=
\left[
\begin{array}{cccc}
c_{0,0}^{m} & c_{0,1}^{m}  & \cdots & c_{0,N-1}^{m}\\
c_{1,0}^{m}  & c_{1,1}^{m}  & \cdots & c_{1,N-1}^{m}\\
\vdots   & \vdots  & \ddots & \vdots \\
c_{K-1,0}^{m}  & c_{K-1,1}^{m} & \cdots & c_{K-1,N-1}^{m} \\
\end{array}\right],
\end{eqnarray*}
where $\mathbf{c}_{k}^{m}=(c_{k,0}^{m}, c_{k,1}^{m},\cdots, c_{k,N-1}^{m})$ is the $k$-th constituent sequence of length $N$, $0\leq k \leq K-1$, $0\leq m\leq M-1$.
The periodic correlation function between two complementary sequences $\mathbf{C}^{{m}_{1}}$ and $\mathbf{C}^{{m}_{2}}$ is defined by 
\begin{eqnarray*}
R_{{{\textbf{C}}^{{m}_{1}}},{{\textbf{C}}^{{m}_{2}}}}(\tau)=\sum_{k=0}^{K-1}R_{{\mathbf{c}_{k}^{{m}_{1}}},{\mathbf{c}_{k}^{{m}_{2}}}}(\tau), ~ 0\leq \tau < N ,
\end{eqnarray*}
where $0\leq m_{1},m_{2}\leq M-1$. 
The  maximum periodic auto-correlation magnitude and the maximum periodic cross-correlation magnitude of  $\mathbb{C}$ are  respectively defined by
\begin{eqnarray*}
\vartheta_{a} = \max\left\{|R_{{{\textbf{C}}^{{m}}},{{\textbf{C}}^{{m}}}}(\tau)|:0\leq m< M,\ 0<\tau< N\right\},\\
\vartheta_{c} = \max\left\{|R_{{{\textbf{C}}^{{m}_{1}}},{{\textbf{C}}^{{m}_{2}}}}(\tau)|:0\leq m_{1}\neq m_{2}< M,\ 0\leq\tau< N\right\}.
\end{eqnarray*}
Define the maximum periodic correlation magnitude (also called periodic tolerance) by
\begin{eqnarray*}
 \vartheta_{\max} = \max\{\vartheta_{c}, \vartheta_{a}\}.
\end{eqnarray*}
It is an important performance measure of the complementary sequence
set $\mathbb{C}$ in practice applications.
If $\vartheta_{\max}> 0$, then $\mathbb{C}$ is called a periodic $(M, K, N, \vartheta_{\max})$ quasi-complementary sequence set (QCSS). Otherwise, $\mathbb{C}$ is called a periodic perfect complementary sequence set (PCSS). 
Particularly, a PCSS reduces to a matrix consisting of two row sequences provided that $M = 1$ and $K = 2$. This matrix is called a Golay complementary pair.

\subsection{Additive and multiplicative characters of finite fields}\label{subsection}
For a prime $p$  and  a positive integer $n$, let $q=p^n$ and $\zeta_p$ be the primitive $p$-th root of complex unity. 
Let $\gf_q$ denote the finite field with $q$ elements. Define the trace function from $\gf_q$ onto $\gf_p$ by
$$\tr_{q/p}(x)=x+x^p+\cdots+x^{p^{n-1}},\ x\in \gf_q.$$
An \emph{additive character} of $\gf_q$ is defined as a homomorphism $\chi$ from $\gf_q$ to $\mathbb{C}^*$ such that  $\chi(x+y)=\chi(x)\chi(y)$ for any $x,y\in \gf_q$. Then
$$\chi_a(x)=\zeta_{p}^{\tr_{q/p}(ax)},\ x\in \gf_q,$$
defines an additive character for each $a\in \gf_q$.
By definition, $\chi_a(x)=\chi_1(ax)$. $\chi_0$  and $\chi_1$ are called the trivial and canonical additive characters, respectively. 
The \emph{complex conjugate} $\overline{\chi_a}$ of $\chi_a$ is defined by $\overline{\chi_a}(x)=\overline{\chi_a(x)}=\chi_a(-x)$ for $x\in \gf_q$.
From \cite{Lidl}, the orthogonal relation of additive characters is given by
\begin{eqnarray*}
\sum_{x\in \gf_q}\chi_1(ax)=\left\{
\begin{array}{ll}
q  &   \mbox{if $a=0$},\\
0    &   \mbox{if $a\in \gf_q^*$}.
\end{array} \right.
\end{eqnarray*}

Let  $\alpha$ be a primitive element of $\gf_q$ and $\zeta_{q-1}$ be the primitive $(q-1)$-th root of complex unity.
A  \emph{multiplicative character} of $\gf_q$ is defined as   a homomorphism $\varphi$ from $\gf_q^*$ to $\mathbb{C}^*$ such that  $\varphi(xy)=\varphi(x)\varphi(y)$ for any $x,y\in \gf_q^*$.
Then 
$$\varphi_j(\alpha^k)=\zeta_{q-1}^{jk},\ k=0,1,\cdots,q-2,$$
for each $j=0,1,\cdots,q-2$, defines a multiplicative character. In particular, $\varphi_0$  is called the trivial  multiplicative character. For odd $q$, if $j=\frac{q-1}{2}$, then $\eta:=\varphi_{\frac{q-1}{2}}$  is called the quadratic multiplicative character of $\gf_q$. 
The \emph{complex conjugate} $\overline{\varphi}$ of a multiplicative character $\varphi$ is defined by $\overline{\varphi}(x)=\overline{\varphi(x)}=\varphi(x^{-1})$ for $x\in \gf_q^*$.
From \cite{Lidl}, the orthogonal relation of multiplicative characters is given by 
\begin{eqnarray*}
\sum_{x\in \gf_q^*}\varphi_j(x)=\left\{
\begin{array}{ll}
q-1  &   \mbox{if $j=0$},\\
0    &   \mbox{if $1\leq j \leq q-2$}.
\end{array} \right.
\end{eqnarray*}

\subsection{Gaussian sums and character sums over finite fields}
The Gaussian sum $G(\varphi,\chi)$ for a multiplicative character $\varphi$ and an additive character $\chi$ of $\gf_q$ is defined by
$$G(\varphi,\chi)=\sum_{x\in \gf_q^*}\varphi(x)\chi(x).$$
The values of Gaussian sums are different to determine. In some special cases, the explicit values of Gaussian sums are known as follows.
\begin{lemma}\label{quadGuasssum1}\cite{Lidl}
Let $q=p^n$ with odd prime $p$. Then
\begin{eqnarray*}G(\eta,\chi_1)&=&(-1)^{n-1}(\sqrt{-1})^{(\frac{p-1}{2})^2n}\sqrt{q}\\
 &=&\left\{
\begin{array}{lll}
(-1)^{n-1}\sqrt{q}    &   \mbox{for }p\equiv 1\pmod{4},\\
(-1)^{n-1}(\sqrt{-1})^{n}\sqrt{q}    &   \mbox{for }p\equiv 3\pmod{4}.
\end{array} \right. \end{eqnarray*}
\end{lemma}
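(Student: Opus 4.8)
The plan is to reduce the evaluation of $G(\eta,\chi_1)$ over $\gf_q$ to the classical quadratic Gauss sum over the prime field $\gf_p$, together with the Davenport--Hasse lifting relation.

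First I would pin $G := G(\eta,\chi_1)$ down up to sign by a short orthogonality argument. Expanding $G\overline{G}$, substituting to separate the two summation variables, and applying the character orthogonality relations of Section~\ref{subsection} to the inner sum (using that $\eta$ is nontrivial, as $q$ is odd) gives $|G|^2 = q$; combined with $\overline{G} = \eta(-1)\,G$ (which follows from $\overline{\chi_1}(x) = \chi_1(-x)$ and $\overline{\eta} = \eta$), this yields $G^2 = \eta(-1)\,q = (-1)^{(q-1)/2}q$. Hence $G = \pm\sqrt{q}$ when $q\equiv 1\pmod 4$ and $G = \pm\sqrt{-1}\,\sqrt{q}$ when $q\equiv 3\pmod 4$; only the choice of sign (and, in the second case, which square root of $-1$) remains.

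Next I would settle the base case $n=1$. Here $G(\eta,\chi_1) = \sum_{x\in\gf_p}\zeta_p^{x^2}$, whose exact value is Gauss's celebrated theorem: it equals $\sqrt{p}$ when $p\equiv 1\pmod 4$ and $\sqrt{-1}\,\sqrt{p}$ when $p\equiv 3\pmod 4$. I would invoke one of the standard derivations of the sign — for instance Schur's evaluation of the trace and determinant of the discrete Fourier matrix $(\zeta_p^{jk})_{0\le j,k\le p-1}$, whose spectral data encode this Gauss sum, or equivalently the Poisson-summation argument. This is the genuinely delicate point: the magnitude $\sqrt{p}$ is elementary, but the sign requires a non-formal input, and it is the main obstacle of the whole statement.

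Finally I would lift to $\gf_q = \gf_{p^n}$. Writing $\eta_p$, $\chi_{1,p}$ for the quadratic and canonical characters of $\gf_p$, one has $\chi_1 = \chi_{1,p}\circ\tr_{q/p}$ by definition, and since the norm map $\Norm_{q/p}\colon \gf_q^*\to\gf_p^*$ is surjective, $\eta_p\circ\Norm_{q/p}$ is the unique quadratic character $\eta$ of $\gf_q$. The Davenport--Hasse relation then gives $-G(\eta,\chi_1) = \bigl(-G(\eta_p,\chi_{1,p})\bigr)^n$. Substituting the two cases for $G(\eta_p,\chi_{1,p})$ and simplifying the resulting powers of $-1$ and $\sqrt{-1}$ yields $G(\eta,\chi_1) = (-1)^{n-1}\sqrt{q}$ for $p\equiv 1\pmod 4$ and $(-1)^{n-1}(\sqrt{-1})^n\sqrt{q}$ for $p\equiv 3\pmod 4$. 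Observing that $(\sqrt{-1})^{(\frac{p-1}{2})^2n}$ equals $1$ when $p\equiv 1\pmod 4$ (then $(\frac{p-1}{2})^2\equiv 0\pmod 4$) and equals $(\sqrt{-1})^n$ when $p\equiv 3\pmod 4$ (then $(\frac{p-1}{2})^2\equiv 1\pmod 4$), the two cases merge into the single closed form $G(\eta,\chi_1) = (-1)^{n-1}(\sqrt{-1})^{(\frac{p-1}{2})^2n}\sqrt{q}$, as claimed. Everything after the $n=1$ sign determination is formal manipulation.
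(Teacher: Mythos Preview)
Your argument is correct. The paper does not prove this lemma at all: it is simply quoted from Lidl--Niederreiter with the citation \cite{Lidl} and no argument supplied. Your route --- pinning down $G^2=\eta(-1)q$ by orthogonality, invoking Gauss's sign determination for the base case $n=1$, and then lifting via the Davenport--Hasse relation $-G(\eta,\chi_1)=\bigl(-G(\eta_p,\chi_{1,p})\bigr)^n$ --- is in fact exactly the proof given in that reference (Theorems~5.12, 5.14, 5.15 there), so you have reconstructed the textbook derivation the paper is pointing to.
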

\begin{lemma}\label{quadGuasssum2}\cite{Lidl}
Let $\varphi$ be a multiplicative character of $\gf_q$ and $\chi$ be an additive character of $\gf_q$. Then  
\begin{eqnarray*}
G(\varphi,\chi)&=&\left\{
\begin{array}{lll}
q-1 & \text{if}\ \varphi=\varphi_{0}, \chi=\chi_{0},\\
-1  &  \text{if}\ \varphi=\varphi_{0}, \chi \neq \chi_{0},\\
0  &  \text{if}\ \varphi \neq \varphi_{0}, \chi=\chi_{0}.
\end{array}\right. \end{eqnarray*}
If $\varphi\neq\varphi_{0}$ and $\chi\neq\chi_{0}$, then $|G(\varphi,\chi_1)|=\sqrt{q}$. 

\end{lemma}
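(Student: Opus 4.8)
The plan is to settle the lemma by direct character‑sum manipulations, relying only on the two orthogonality relations recalled above together with the identities $\overline{\varphi(x)}=\varphi(x^{-1})$ and $\overline{\chi_1(x)}=\chi_1(-x)$.

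First I would dispose of the three degenerate cases, each a one‑line computation. If $\varphi=\varphi_0$ and $\chi=\chi_0$, then every summand of $G(\varphi,\chi)=\sum_{x\in\gf_q^*}\varphi(x)\chi(x)$ equals $1$, so $G(\varphi_0,\chi_0)=|\gf_q^*|=q-1$. If $\varphi=\varphi_0$ and $\chi\neq\chi_0$, then $G(\varphi_0,\chi)=\sum_{x\in\gf_q^*}\chi(x)=\sum_{x\in\gf_q}\chi(x)-\chi(0)=0-1=-1$ by additive orthogonality. If $\varphi\neq\varphi_0$ and $\chi=\chi_0$, then $G(\varphi,\chi_0)=\sum_{x\in\gf_q^*}\varphi(x)=0$ by multiplicative orthogonality.

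The only case with real content is $\varphi\neq\varphi_0$, $\chi\neq\chi_0$, where I would evaluate the squared modulus. Using the conjugation identities above,
\[
|G(\varphi,\chi_1)|^2=G(\varphi,\chi_1)\,\overline{G(\varphi,\chi_1)}=\sum_{x\in\gf_q^*}\sum_{y\in\gf_q^*}\varphi(xy^{-1})\,\chi_1(x-y),
\]
and the substitution $x=yz$ (a bijection of $\gf_q^*$ in the variable $z$ for each fixed $y\in\gf_q^*$) turns this into
\[
|G(\varphi,\chi_1)|^2=\sum_{z\in\gf_q^*}\varphi(z)\sum_{y\in\gf_q^*}\chi_1\bigl(y(z-1)\bigr).
\]
By additive orthogonality the inner sum equals $q-1$ if $z=1$ and $-1$ if $z\neq1$, so
\[
|G(\varphi,\chi_1)|^2=(q-1)\varphi(1)-\sum_{z\in\gf_q^*\setminus\{1\}}\varphi(z)=(q-1)-\Bigl(\sum_{z\in\gf_q^*}\varphi(z)-\varphi(1)\Bigr)=(q-1)+1=q,
\]
using multiplicative orthogonality once more, whence $|G(\varphi,\chi_1)|=\sqrt{q}$.

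I do not expect a genuine obstacle here: the whole argument is formal. The two points that need care are keeping straight whether a given sum ranges over $\gf_q$ or over $\gf_q^*$, and verifying that $x\mapsto yz$ is the claimed bijection of $\gf_q^*$. It is worth stressing what the computation does \emph{not} give: it pins down only the absolute value $\sqrt{q}$, while the precise value (argument) of $G(\varphi,\chi_1)$ is delicate in general and is made explicit only for the quadratic character in Lemma~\ref{quadGuasssum1}; this is exactly why the statement is phrased in terms of $|G(\varphi,\chi_1)|$ in the generic case. Finally, if the modulus is wanted for an arbitrary nontrivial additive character $\chi_b$ with $b\in\gf_q^*$, one notes $\chi_b(x)=\chi_1(bx)$ and substitutes $x\mapsto b^{-1}x$ to obtain $G(\varphi,\chi_b)=\varphi(b^{-1})\,G(\varphi,\chi_1)$, which has the same absolute value.
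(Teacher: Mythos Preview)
Your argument is correct and is exactly the classical textbook computation (as in the cited reference of Lidl and Niederreiter): the three degenerate cases follow immediately from the orthogonality relations, and the nondegenerate case is handled by expanding $|G(\varphi,\chi_1)|^2$ as a double sum and applying the substitution $x=yz$ to separate variables. The paper itself does not supply a proof but merely cites \cite{Lidl}, so there is nothing further to compare.
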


The following character sums will be used in this paper. 
 \begin{lemma}\label{lem-charactersum}\cite{Lidl}
Let $\chi$ be a nontrivial additive character of $\gf_q$ and $q$ be a power of an odd prime. Let $f(x)=a_2x^2+a_1x+a_0\in \gf_q[x]$ with $a_2\neq 0$. Then
$$\sum_{x\in \gf_q}\chi(f(x))=\chi(a_0-a_1^2(4a_2)^{-1})\eta(a_2)G(\eta,\chi).$$
\end{lemma}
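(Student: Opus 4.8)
The plan is to evaluate this sum by completing the square and then recognizing what remains as (essentially) a Gaussian sum. Since $q$ is a power of an \emph{odd} prime, both $2$ and $4a_2$ are invertible in $\gf_q$, so I can write
$$f(x) = a_2\left(x + \frac{a_1}{2a_2}\right)^2 + \left(a_0 - \frac{a_1^2}{4a_2}\right).$$
Because $x \mapsto x + a_1(2a_2)^{-1}$ is a bijection of $\gf_q$, substituting $y = x + a_1(2a_2)^{-1}$ and pulling the constant term out of the sum via $\chi(u+v)=\chi(u)\chi(v)$ reduces the problem to computing $\sum_{y\in\gf_q}\chi(a_2 y^2)$, up to the factor $\chi(a_0 - a_1^2(4a_2)^{-1})$.

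For that residual sum I would count preimages under the squaring map: with the standard convention $\eta(0)=0$, every $c\in\gf_q$ has exactly $1+\eta(c)$ square roots in $\gf_q$. Hence
$$\sum_{y\in\gf_q}\chi(a_2 y^2) = \sum_{c\in\gf_q}(1+\eta(c))\chi(a_2 c) = \sum_{c\in\gf_q}\chi(a_2 c) + \sum_{c\in\gf_q^*}\eta(c)\chi(a_2 c),$$
where $\eta(0)=0$ lets me restrict the second sum to $\gf_q^*$. The first sum is $0$ by the orthogonality relation for additive characters, since $a_2\neq 0$. In the second sum, the substitution $c = a_2^{-1}z$ (a bijection of $\gf_q^*$) gives $\sum_{z\in\gf_q^*}\eta(a_2^{-1}z)\chi(z) = \eta(a_2^{-1})\,G(\eta,\chi)$, and since $\eta$ is real-valued with $\eta(a_2)\eta(a_2^{-1})=\eta(1)=1$ we have $\eta(a_2^{-1})=\eta(a_2)$. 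Assembling the pieces yields
$$\sum_{x\in\gf_q}\chi(f(x)) = \chi\!\left(a_0 - a_1^2(4a_2)^{-1}\right)\eta(a_2)\,G(\eta,\chi),$$
which is the claimed identity; note that nothing in the argument required $\chi$ to be the canonical character, consistent with $G(\eta,\chi)$ appearing for a general nontrivial $\chi$.

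There is no serious obstacle here — this is a classical computation (as the citation indicates). The only points that need a little care are that the characteristic being odd is precisely what legitimizes completing the square and dividing by $2a_2$, and that the $\eta(0)=0$ bookkeeping must be kept consistent so that the preimage-counting identity $\#\{y:y^2=c\}=1+\eta(c)$ is correct at $c=0$ as well as for $c\in\gf_q^*$.
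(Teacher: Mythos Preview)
Your proof is correct and follows exactly the classical argument found in Lidl--Niederreiter, which the paper merely cites without giving its own proof. There is nothing to compare: the paper defers to \cite{Lidl}, and your completion-of-the-square plus preimage-counting computation is precisely the textbook derivation.
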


\subsection{The number of zeros of some polynomials over finite fields}
In the following, we present some known results on the numbers of zeros of some polynomials over finite fields.
\begin{lemma}\label{lem-x^2root}\cite{KP}
Let $\gf_q$ be a finite field of characteristic $2$ and let $f(x)=ax^2+bx+c\in \gf_q[x]$ be a polynomial of degree $2$. Then 
\begin{enumerate}
\item $f$ has exactly one zero in $\gf_q$ if and only if $b=0$;
\item $f$ has exactly two zeros in $\gf_q$ if and only if $b\neq0$ and $\tr_{q/2}(\frac{ac}{b^2})=0$;
\item $f$ has no zero in $\gf_q$ if and only if $b\neq0$ and $\tr_{q/2}(\frac{ac}{b^2})=1$.
\end{enumerate}
\end{lemma}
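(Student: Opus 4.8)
The plan is to split on whether $b = 0$, and in the main case $b \neq 0$ to reduce the equation to Artin--Schreier form.

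\emph{The case $b = 0$.} Here $f(x) = ax^2 + c$ with $a \neq 0$. Since $\gf_q$ has characteristic $2$, squaring is an automorphism of $\gf_q$, so $x^2 = c/a$ has a unique solution; equivalently $f = a(x+d)^2$ with $d^2 = c/a$, a single repeated root. Hence $b = 0$ forces exactly one zero.

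\emph{The case $b \neq 0$.} First note $f'(x) = 2ax + b = b \neq 0$, so $\gcd(f,f') = 1$ and $f$ is separable; therefore $f$ has either $0$ or $2$ roots in $\gf_q$ and never exactly one. Combined with the previous paragraph, this already establishes that $f$ has exactly one zero if and only if $b = 0$. To decide between $0$ and $2$ roots, divide $ax^2+bx+c = 0$ by $a$ and substitute $x = (b/a)y$; this bijective change of variable turns the equation into $y^2 + y = ac/b^2$. Now $\phi \colon y \mapsto y^2 + y$ is $\gf_2$-linear on $\gf_q$ with $\ker \phi = \{0,1\}$, so its image has index $2$; moreover $\tr_{q/2}(y^2 + y) = \tr_{q/2}(y^2) + \tr_{q/2}(y) = 0$ for every $y$, since $\tr_{q/2}(y^2) = \tr_{q/2}(y)$, so $\image \phi \subseteq \ker(\tr_{q/2})$ and hence $\image \phi = \ker(\tr_{q/2})$ by comparing dimensions over $\gf_2$. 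Consequently $y^2 + y = ac/b^2$ is solvable precisely when $\tr_{q/2}(ac/b^2) = 0$, in which case it has exactly $|\ker \phi| = 2$ solutions. Pulling back through $x = (b/a)y$: if $\tr_{q/2}(ac/b^2) = 0$ then $f$ has exactly two zeros, and if $\tr_{q/2}(ac/b^2) = 1$ then $f$ has none.

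Assembling the two cases yields the three stated equivalences. The whole argument is elementary; the only step deserving a moment's care is the identification $\image \phi = \ker(\tr_{q/2})$, which rests on the index-$2$ count for $\phi$ together with the vanishing of the absolute trace on $\image \phi$ — both standard facts about finite fields of characteristic $2$.
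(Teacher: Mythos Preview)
Your proof is correct and complete. Note, however, that the paper does not supply its own proof of this lemma: it is simply quoted from the reference \cite{KP} (Pommerening's notes on quadratic equations in characteristic~$2$), so there is no in-paper argument to compare against. Your reduction to the Artin--Schreier equation $y^2+y=ac/b^2$ via the substitution $x=(b/a)y$, together with the identification $\image(y\mapsto y^2+y)=\ker(\tr_{q/2})$, is precisely the standard approach one finds in that reference and elsewhere; it is the natural proof.
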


Now we consider the polynomial
\begin{eqnarray*}
f(x)=x^{p^k+1}+ax^{p^k}+bx+c\in \Bbb F_q[x].
\end{eqnarray*}
For $a\neq0$, we substitute $x$ by $x-a$ in $f(x)$. Let $\alpha_1:=b-a^{p^k}$ and $\beta_1:=c-ab$.
Then $f(x)$ can be reduced to $g(x)=x^{p^k+1}+\alpha_1 x +\beta_1$.
If we substitute $x$ by $ux$ in $g(x)$ for  $u^{p^k}=\alpha$ and $\alpha=b-a^{p^k}\neq0$, then $g(x)$  can be transformed into the form $P_\gamma(x)=x^{p^k+1}+x+\gamma$,
 where $\gamma=\frac{\beta}{u^{p^k+1}}$. By the results in \cite{B}, we can easily derive the number of zeros of $f(x)$ in $\gf_q$. 
 
\begin{lemma}\label{1111}\cite{XG}
Let $k$ and $n$ be positive integers with $\gcd(k,n)=h$. Let $f(x)=x^{p^k+1}+ax^{p^k}+bx+c$, where $a, b ,c\in \gf_q$. The number of zeros of $f(x)$ in $\gf_q$ is denoted by $N_f$.
\begin{enumerate}
\item If $a=0, b=0$ or $a\neq0, b=a^{p^k}$, then
\begin{eqnarray*}
N_f=
\left\{
\begin{array}{ll}
1 & \text{if} ~\frac{n}{h}~ \text{is}~ \text{odd} ~\text{and}~ p=2,\\
0, 1 ~\text{or} ~2 & \text{if} ~\frac{n}{h} ~\text{is} ~\text{odd} ~\text{and} ~p ~\text{is} ~\text{odd},\\
0, 1 ~\text{or}~ p^h+1 &  \text{if}~ \frac{n}{h} ~\text{is}~ \text{even}.
\end{array}\right.
\end{eqnarray*}
\item If $a=0, b\neq0$ or $a\neq0, b\neq a^{p^k}$, then $N_f$ takes either $0, 1, 2$ or $p^{h}+1$.
\end{enumerate}
\end{lemma}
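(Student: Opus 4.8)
\noindent\emph{Proof proposal.} The plan is to bring $f$ to a normal form by affine substitutions that preserve the number of zeros in $\gf_q$, and then count zeros of the normal form. If $a=0$ we already have $f(x)=x^{p^k+1}+bx+c$; if $a\neq 0$, the substitution $x\mapsto x-a$ turns $f$ into $g(x)=x^{p^k+1}+\alpha_1x+\beta_1$ with $\alpha_1=b-a^{p^k}$ and $\beta_1=c-ab$, exactly as displayed before the statement. In either case the hypothesis of item~1 is precisely ``$\alpha_1=0$'' and the hypothesis of item~2 is ``$\alpha_1\neq 0$''. When $\alpha_1\neq 0$, one further substitutes $x\mapsto ux$ with $u$ the unique element of $\gf_q$ satisfying $u^{p^k}=\alpha_1$ (it exists since $t\mapsto t^{p^k}$ is an automorphism of $\gf_q$) and divides by the nonzero constant $u^{p^k+1}$, reaching $P_\gamma(x)=x^{p^k+1}+x+\gamma$ with $\gamma=\beta_1u^{-(p^k+1)}$. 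So it suffices to count zeros of $x^{p^k+1}+\beta_1$ in item~1 and of $P_\gamma$ in item~2.

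For item~1, $N_f$ is the number of solutions of $x^{p^k+1}=-\beta_1$. If $\beta_1=0$ the only solution is $x=0$, so $N_f=1$. If $\beta_1\neq 0$, then $x\mapsto x^{p^k+1}$ is an endomorphism of $\gf_q^{*}$ all of whose nonempty fibres have the same size $d:=\gcd(p^k+1,q-1)$, so $N_f\in\{0,d\}$ (and $N_f=d$ is forced when $d=1$, since the map is then bijective). I would finish by invoking the classical evaluation of $d$: using $d\mid\gcd(p^{2k}-1,p^n-1)=p^{\gcd(2k,n)}-1$ together with $p^k\equiv-1\pmod d$ one gets $d\mid 2$ when $n/h$ is odd (hence $d=1$ for $p=2$ and $d=2$ for $p$ odd), while for $n/h$ even a direct check shows $p^h+1$ divides both $p^k+1$ and $p^n-1$ and in fact equals $d$. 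Merging the subcases $\beta_1=0$ and $\beta_1\neq 0$ yields the three possibility lists $\{1\}$, $\{0,1,2\}$, $\{0,1,p^h+1\}$ of item~1.

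For item~2, $N_f$ equals the number of zeros of $P_\gamma$ in $\gf_q=\gf_{p^n}$, and here I would simply quote the root-count of the trinomial $x^{p^k+1}+x+\gamma$ from \cite{B}: for every $\gamma$ this number lies in $\{0,1,2,p^{\gcd(k,n)}+1\}=\{0,1,2,p^h+1\}$. This gives item~2 and completes the proof. The genuinely substantive input is this last classification---essentially Bluher's theorem on $x^{q'+1}+x+\gamma$ with $q'=p^k$---so that is where I expect the real difficulty to sit; everything else is the two affine changes of variable above and elementary $\gcd$ bookkeeping. The two points to be careful about are the legitimacy of the normalizing substitution $x\mapsto ux$ (existence of a $p^k$-th root of $\alpha_1$ in $\gf_q$), and the need, in item~1, to peel off the degenerate case $\beta_1=0$, which always contributes the single zero $x=0$ irrespective of the parity of $n/h$.
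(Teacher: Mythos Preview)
Your proposal is correct and follows essentially the same approach as the paper: the preamble to the lemma in the paper performs exactly the two affine substitutions $x\mapsto x-a$ and $x\mapsto ux$ (with $u^{p^k}=\alpha_1$) that you describe, reducing to Bluher's trinomial $P_\gamma(x)=x^{p^k+1}+x+\gamma$, and then cites \cite{B} and \cite{XG} for the root count. Your treatment of item~1 (the case $\alpha_1=0$) via the explicit evaluation of $\gcd(p^k+1,p^n-1)$ is more detailed than what the paper spells out, but it is exactly the computation that underlies the cited result and is carried out correctly.
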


\begin{corollary}\label{cor}
Let $h=1$, $n\geq 3$ and $p=2$ in Lemma \ref{1111}. Then $N_f\in \{0,1,2,3\}$ and there exists $(a,b,c)\in \gf_q^3$ such that $N_f=3$.
\end{corollary}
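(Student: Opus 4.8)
The first assertion I would read off directly from Lemma~\ref{1111} with $p=2$ and $h=1$: then $p^h+1=3$, so case~(1) gives $N_f=1$ when $n$ is odd and $N_f\in\{0,1,3\}$ when $n$ is even, while case~(2) gives $N_f\in\{0,1,2,3\}$; in all cases $N_f\in\{0,1,2,3\}$.

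For the second assertion it is enough to exhibit one triple with $N_f=3$, and I would take $a=0$, $b=1$, which puts $f$ in case~(2) and reduces it to $f(x)=P_\gamma(x)=x^{2^k+1}+x+\gamma$ with $\gamma=c$. Writing $\phi(x)=x^{2^k+1}+x$, the number of zeros of $P_\gamma$ in $\gf_q$ is $|\phi^{-1}(\gamma)|$, so I must find $\gamma$ with $|\phi^{-1}(\gamma)|=3$. The plan is to enumerate the fibres of $\phi$. From $\phi(x+t)=\phi(x)+\phi(t)+\big(x^{2^k}t+xt^{2^k}\big)$, for $t\ne0$ the condition $\phi(x)=\phi(x+t)$ amounts to the $\gf_2$-linear equation $x^{2^k}+t^{2^k-1}x=t^{2^k}+1$; since $\gcd(2^k-1,q-1)=2^{\gcd(k,n)}-1=1$, the map $x\mapsto x^{2^k}+t^{2^k-1}x$ has kernel $\{0,t\}$, so this equation has $0$ or $2$ solutions, and a short adjoint (trace-form) computation shows it is solvable precisely when $\Tr_{q/2}(t^{-1})=\Tr_{q/2}(1)$. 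Hence the fibre sizes of $\phi$ lie in $\{0,1,2,3\}$, and writing $A_i$ for the number of $\gamma$ with $|\phi^{-1}(\gamma)|=i$, double-counting the pairs $(x,x')$ with $x\ne x'$ and $\phi(x)=\phi(x')$ yields $A_2+3A_3=\#\{t\in\gf_q^{*}:\Tr_{q/2}(t^{-1})=\Tr_{q/2}(1)\}$, which equals $2^{n-1}$ or $2^{n-1}-1$ according as $n$ is odd or even. Combined with $\sum_i A_i=q=\sum_i iA_i$ and the enumeration of \cite{B}, this pins $A_3$ down: $A_3=(2^{n-1}-1)/3$ for $n$ odd and $A_3=(2^{n-1}-2)/3$ for $n$ even, both positive for $n\ge3$. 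So a $\gamma$ with a size-$3$ fibre exists, and $(a,b,c)=(0,1,\gamma)$ gives $N_f=3$. When $k=1$ one can bypass \cite{B}: a size-$3$ zero set of $x^{3}+x+\gamma$ is necessarily $W\setminus\{0\}$ for a $2$-dimensional $\gf_2$-subspace $W\subseteq\gf_q$ whose nonzero elements have second elementary symmetric function equal to $1$, and the scaling action $W\mapsto\lambda W$ counts such $W$ directly.

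I expect the substance of the corollary to be entirely in the second assertion, and the main obstacle is that Lemma~\ref{1111} only lists the possible values of $N_f$ and does not by itself guarantee that $p^h+1$ is attained; the distributional information behind \cite{B} (equivalently, an exact count of the trace condition above) really is needed. Since $k$ may be as large as $n-1$, there is no room to replace that input by a crude estimate such as ``a curve of bounded degree over $\gf_q$ has an $\gf_q$-rational point''.
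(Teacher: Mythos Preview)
Your approach is essentially the same as the paper's: for the first assertion both of you read off Lemma~\ref{1111}, and for the second both of you pass to the normalized form $P_\gamma(x)=x^{2^k+1}+x+\gamma$ and invoke the enumeration results in the literature---the paper cites Kim--Mesnager \cite[Theorems~8--10]{KHM}, you cite Bluher~\cite{B}. Your additional fibre-counting computation is correct but, as you yourself note at the end, it is one equation short of pinning down $A_3$, so the citation to \cite{B} still carries the real weight, exactly as in the paper. One small presentational slip: the sentence ``Hence the fibre sizes of $\phi$ lie in $\{0,1,2,3\}$'' is not a consequence of the preceding per-$t$ count (knowing that $\phi(x)=\phi(x+t)$ has $0$ or $2$ solutions in $x$ for each fixed $t$ does not by itself bound the fibre cardinalities); that bound is really Lemma~\ref{1111} again, which you have already invoked, so there is no genuine gap.
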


\begin{proof}
By Lemma \ref{1111}, we have $N_f\in \{0,1,2,3\}$.
According to \cite[Theorems 8-10]{KHM} and the relationship between $f(x)$ and $P_\gamma(x)$, it is easy to deduce that there exists $(a,b,c)\in \gf_q^3$ such that $N_f=3$. 
\end{proof}

\begin{lemma} \label{1112}\cite{XG}
Let $q=p^n$ and $k$ be a positive integer with $\gcd(k,n)=h$. Then the trinomial $x^{p^k}-ax-b$ has either none, one, or $p^h$ zeros in $\gf_q$, where $a,b\in \gf_q$.   
\end{lemma}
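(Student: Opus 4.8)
The plan is to exploit the $\gf_p$-linearity of the polynomial $L(x):=x^{p^k}-ax$. Since $L$ is a $p$-polynomial (a linearized polynomial), the map $x\mapsto L(x)$ is additive on $\gf_q$, so the solution set of $x^{p^k}-ax-b=0$, i.e.\ the preimage $L^{-1}(b)$, is either empty or a coset $x_0+\ker L$ of the kernel $V:=\{x\in\gf_q:x^{p^k}=ax\}$. Hence the number of zeros of the trinomial is either $0$ or $|V|$, and everything reduces to computing $|V|$.

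Next I would upgrade the $\gf_p$-linearity of $L$ to $\gf_{p^h}$-linearity, using that $h=\gcd(k,n)$ divides $k$. Indeed, for any $c\in\gf_{p^h}$ we have $c^{p^h}=c$, and since $h\mid k$ this forces $c^{p^k}=c$; therefore $L(cx)=c^{p^k}x^{p^k}-acx=c(x^{p^k}-ax)=cL(x)$, and $L$ is clearly additive, so $V=\ker L$ is an $\gf_{p^h}$-subspace of $\gf_q$. In particular $|V|=(p^h)^d$ with $d=\dim_{\gf_{p^h}}V\ge 0$, so $|V|$ is automatically a power of $p^h$; it remains only to rule out $d\ge 2$.

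To show $d\le 1$: if $a=0$ then $V=\{0\}$ and $d=0$, so assume $a\ne 0$. Any nonzero $x\in V$ satisfies $x^{p^k-1}=a$. Hence for $x,y\in V\setminus\{0\}$ we get $(xy^{-1})^{p^k-1}=1$, so the multiplicative order of $xy^{-1}$ divides both $p^k-1$ and $p^n-1$, hence divides $\gcd(p^k-1,p^n-1)=p^h-1$; thus $(xy^{-1})^{p^h}=xy^{-1}$, i.e.\ $xy^{-1}\in\gf_{p^h}$. So any two nonzero elements of $V$ are $\gf_{p^h}$-proportional, giving $d\le 1$ and $|V|\in\{1,p^h\}$. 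Combined with the first paragraph, the trinomial has $0$, $1$, or $p^h$ zeros in $\gf_q$.

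I do not expect a serious obstacle; the only point needing care is the passage from $\gf_p$-linearity to $\gf_{p^h}$-linearity, which is exactly where the hypothesis $h\mid k$ (automatic from $h=\gcd(k,n)$) enters, together with the elementary fact that $z^{p^k}=z=z^{p^n}$ in $\gf_q$ implies $z\in\gf_{p^h}$.
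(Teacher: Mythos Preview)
Your argument is correct. Note, however, that the paper does not supply its own proof of this lemma: it is quoted from \cite{XG} and stated without demonstration, so there is no ``paper's proof'' to compare against. Your approach is the standard one for linearized polynomials and is entirely sound. One small simplification you might consider: once you have established that the number of zeros is either $0$ or $|V|$ with $V=\{x\in\gf_q:x^{p^k}=ax\}$, you can bypass the $\gf_{p^h}$-vector-space structure altogether by observing that for $a\neq 0$ the nonzero elements of $V$ are exactly the solutions in $\gf_q^*$ of $x^{p^k-1}=a$, and the number of such solutions is either $0$ or $\gcd(p^k-1,q-1)=p^h-1$; adding back $x=0$ gives $|V|\in\{1,p^h\}$ directly. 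Your route via $\gf_{p^h}$-linearity is equally valid and perhaps more conceptual.
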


\section{New constructions of periodic QCSSs}
In this section, we will present several new constructions of asymptotically optimal or near-optimal periodic QCSSs with large set sizes from special polynomials over finite fields or Gaussian sums.

\subsection{The first construction of periodic QCSSs from quadratic polynomials}\label{secA}
Let $q=p^{n}$ for a prime $p$ and a positive integer $n$. Let $\alpha$ be a primitive element of $\gf_q$. 
Assume that $d_0, d_1,\cdots, d_{q-1}$ are all the elements of $\gf_q$, i.e.  $\gf_q=\{d_0, d_1,\cdots, d_{q-1}\}$.
  Let $f(x)=x^2+ax+b\in \gf_q[x]$.
Let $\chi_{1}$ be the canonical additive character of $\gf_q$. We define a constituent sequence $\mathbf{c}_{l}^{a,b}$ of period $q-1$ as
\begin{eqnarray*}
\mathbf{c}_{l}^{a,b}=\left(\mathbf{c}_{l}^{a,b}(t)\right)_{t=0}^{q-2}, \mbox{ where}\ \mathbf{c}_{l}^{a,b}(t)=\chi_{1}(\alpha^{t}f(d_l))\mbox{ for} ~ 0\leq l\leq q-1.
\end{eqnarray*}
Then we obtain a two-dimensional $q\times(q-1)$ matrix 
\begin{eqnarray*}
\mathbf{C}^{a,b}=\left[
\begin{array}{cccc}
\mathbf{c}_{0}^{a,b}\\
\mathbf{c}_{1}^{a,b}\\
\vdots,\\
\mathbf{c}_{q-1}^{a,b}
\end{array}\right].
\end{eqnarray*}
Such matrixes yield a complementary sequence set given by
\begin{eqnarray}\label{eq-c1}
\mathcal{C}=\left\{\mathbf{C}^{a,b}: a\in\gf_q, b\in \gf_q\right\}.
\end{eqnarray}

Firstly, we study the case that $p$ is an odd prime.
\begin{theorem}\label{111}
Let $q=p^n$ for odd prime $p$ and  positive integer $n$. Let $\mathcal{C}$ be the complementary sequence set
defined in  (\ref{eq-c1}). Then $\mathcal{C}$ is a periodic $(q^2, q, q-1, q)$-QCSS with alphabet size $p$ which is asymptotically optimal with respect
to the lower bound in (\ref{wlech}).
\end{theorem}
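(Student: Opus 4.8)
The plan is to verify the four defining parameters of the QCSS by direct computation of periodic correlation sums. The set size is $|\{(a,b):a,b\in\gf_q\}| = q^2$, the flock size is the number of rows $q$, and the length is $q-1$; so the only real work is to show that every nontrivial periodic correlation magnitude is at most $q$ and that this bound is attained (so $\vartheta_{\max}=q$), and then to check asymptotic optimality against \eqref{wlech}. First I would write out, for two matrices $\mathbf C^{a_1,b_1}$ and $\mathbf C^{a_2,b_2}$ and a shift $\tau$,
\begin{eqnarray*}
R_{\mathbf C^{a_1,b_1},\mathbf C^{a_2,b_2}}(\tau)
=\sum_{l=0}^{q-1}\sum_{t=0}^{q-2}\chi_1\!\bigl(\alpha^t f_1(d_l)-\alpha^{t+\tau}f_2(d_l)\bigr)
=\sum_{l=0}^{q-1}\sum_{t=0}^{q-2}\chi_1\!\bigl(\alpha^t(f_1(d_l)-\alpha^{\tau}f_2(d_l))\bigr),
\end{eqnarray*}
where $f_i(x)=x^2+a_ix+b_i$. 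The inner sum over $t$ runs over all of $\gf_q^*$, so by the orthogonality relation for additive characters it equals $q-1$ when $f_1(d_l)-\alpha^\tau f_2(d_l)=0$ and equals $-1$ otherwise. Writing $Z$ for the number of $l\in\{0,\dots,q-1\}$ with $f_1(d_l)=\alpha^\tau f_2(d_l)$, we get $R = Z(q-1) - (q-Z) = Zq - q = q(Z-1)$.

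So the whole problem reduces to bounding $Z$, the number of zeros in $\gf_q$ of the quadratic $g(x) := f_1(x)-\alpha^\tau f_2(x) = (1-\alpha^\tau)x^2 + (a_1-\alpha^\tau a_2)x + (b_1-\alpha^\tau b_2)$. The key case distinction is whether the leading coefficient $1-\alpha^\tau$ vanishes. Since $1\le\tau\le q-2$ forces $\alpha^\tau\ne 1$ in the auto-correlation case, and likewise $\alpha^\tau\ne1$ whenever $\tau\ne0$, the polynomial $g$ is genuinely quadratic for all $\tau\ne0$; hence $Z\in\{0,1,2\}$, giving $|R|=q|Z-1|\le q$. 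For the cross-correlation at $\tau=0$ we have $g(x)=(a_1-a_2)x+(b_1-b_2)$, a nonzero affine polynomial (nonzero because $(a_1,b_1)\ne(a_2,b_2)$), so $Z\le 1$ and $|R|\le q$; if $a_1=a_2$ then $Z=0$ and $|R|=q$, which simultaneously shows the bound $\vartheta_{\max}=q$ is tight. I would also confirm the auto-correlation is nontrivial, e.g. a shift $\tau$ making $g$ irreducible gives $Z=0$, $|R|=q>0$, so indeed $\vartheta_{\max}=q$ rather than $0$. Finally, the alphabet: every entry is $\chi_1$ of a field element, hence a $p$-th root of unity, so the alphabet size is $p$.

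It remains to check asymptotic optimality. Plugging $M=q^2$, $K=q$, $N=q-1$ into \eqref{wlech} gives
\begin{eqnarray*}
\vartheta_{\opt} = KN\sqrt{\frac{M/K-1}{MN-1}} = q(q-1)\sqrt{\frac{q-1}{q^2(q-1)-1}},
\end{eqnarray*}
and I would show $\rho=\vartheta_{\max}/\vartheta_{\opt} = \frac{q}{q(q-1)}\sqrt{\frac{q^2(q-1)-1}{q-1}} = \frac{1}{q-1}\sqrt{\frac{q^2(q-1)-1}{q-1}}\to 1$ as $q\to\infty$, so the family is asymptotically optimal. The main obstacle is essentially bookkeeping: making sure the reduction $R=q(Z-1)$ is stated cleanly across auto- and cross-correlation, handling the degenerate leading-coefficient case $\tau=0$ separately, and being careful that $l$ ranges over all $q$ field elements (so $Z$ can be as large as $2$ but never more), since that is exactly what pins $\vartheta_{\max}$ at $q$. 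No deep input is needed beyond the orthogonality relation; Lemma \ref{lem-charactersum} is not even required here because the inner sum is over $\gf_q^*$ rather than $\gf_q$, though one could cross-check via it.
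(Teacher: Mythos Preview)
Your proposal is correct, and it takes a genuinely different route from the paper's own proof.

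The paper sums over $l$ first: for $\tau\ne 0$ it applies Lemma~\ref{lem-charactersum} to evaluate $\sum_{x\in\gf_q}\chi_1\bigl((1-\alpha^\tau)\alpha^t x^2+\cdots\bigr)$ in terms of the quadratic character $\eta$ and the Gauss sum $G(\eta,\chi_1)$, and then sums over $t$, ending with a product of two quadratic Gauss sums whose absolute value is $q$ by Lemma~\ref{quadGuasssum1}. You instead sum over $t$ first, use only additive-character orthogonality on $\gf_q^*$ to obtain the clean identity $R=q(Z-1)$, and then bound $Z\le 2$ since $g$ is a nonzero polynomial of degree at most $2$.

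Your argument is more elementary (no quadratic character, no Gauss sums) and, notably, is characteristic-free: it proves Theorems~\ref{111} and~\ref{222} simultaneously. Indeed, the paper's own proof of the $p=2$ case (Theorem~\ref{222}) is essentially your zero-counting argument, invoking Lemma~\ref{lem-x^2root} to pin down $Z$; the paper needed a separate odd-$p$ proof only because the Gauss-sum method does not transfer to even characteristic. What the paper's approach buys in exchange is slightly finer information in the $\tau\ne 0$ case: it exhibits $R$ explicitly as $G(\eta,\chi_1)^2\eta(\cdot)\eta(\cdot)$, pinning it to a specific complex number of modulus $q$ (or $0$), whereas your method only yields $R\in\{-q,0,q\}$. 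For the purposes of the theorem that extra precision is not needed.
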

\begin{proof}
For any two  complementary sequences $\mathbf{C}^{a_{1},b_{1}}$ and $\mathbf{C}^{a_{2},b_{2}}$ of $\mathcal{C}$ and $\tau\in[0,q-2]$, we have
\begin{eqnarray*}
\nonumber &
&R_{\mathbf{C}^{a_{1},b_{1}},\mathbf{C}^{a_{2},b_{2}}}(\tau)\\
\nonumber 
&=&\sum_{l=0}^{q-1}R_{\mathbf{c}_{l}^{a_{1},b_{1}},\mathbf{c}_{l}^{a_{2},b_{2}}}(\tau)
\\
\nonumber 
&=&\sum_{l=0}^{q-1}\sum_{t=0}^{q-2}\chi_{1}(\alpha^{t}(d_{l}^2+a_{1}d_{l}+b_{1}))\overline{\chi_{1}}(\alpha^{t+\tau}(d_{l}^2+a_{2}d_{l}+b_{2}))
\\
&=&\sum_{l=0}^{q-1}\sum_{t=0}^{q-2}\chi_{1}((1-\alpha^{\tau})\alpha^{t}d_{l}^{2}+(a_{1}-a_{2}\alpha^{\tau})\alpha^{t}d_{l}+(b_{1}-b_{2}\alpha^{\tau})\alpha^{t}),
\end{eqnarray*}
where $a_1,b_1,a_2,b_2\in \gf_q$.
We now consider the following cases to determine the value distribution of $R_{\mathbf{C}^{a_{1},b_{1}},\mathbf{C}^{a_{2},b_{2}}}(\tau)$.

{Case 1}: If $\tau=0$, $a_1=a_2$ and $b_1\neq b_2$, then 
\begin{eqnarray}
\nonumber &
&R_{\mathbf{C}^{a_{1},b_{1}},\mathbf{C}^{a_{2},b_{2}}}(\tau)\\
\nonumber 
&=&\sum_{l=0}^{q-1}\sum_{t=0}^{q-2}\chi_{1}((b_{1}-b_{2})\alpha^{t}).
\end{eqnarray}
By the orthogonal relation of the additive characters, we have
\begin{eqnarray*}
R_{\mathbf{C}^{a_{1},b_{1}},\mathbf{C}^{a_{2},b_{2}}}(\tau)=-q.
\end{eqnarray*}

{Case 2}: If $\tau=0$, $a_1\neq a_2$ and $b_1= b_2$, then 
\begin{eqnarray*}
\nonumber &
&R_{\mathbf{C}^{a_{1},b_{1}},\mathbf{C}^{a_{2},b_{2}}}(\tau)\\
\nonumber 
&=&\sum_{l=0}^{q-1}\sum_{t=0}^{q-2}\chi_{1}((a_{1}-a_{2})\alpha^{t}d_{l})\\
&=&\sum_{t=0}^{q-2}\sum_{x\in \gf_q}\chi_{1}((a_{1}-a_{2})\alpha^{t}x).
\end{eqnarray*}
By the orthogonal relation of the additive characters, we have
\begin{eqnarray*}
R_{\mathbf{C}^{a_{1},b_{1}},\mathbf{C}^{a_{2},b_{2}}}(\tau)=0.
\end{eqnarray*}

{Case 3}: If $\tau=0$, $a_1\neq a_2$ and $b_1\neq b_2$, then 
\begin{eqnarray*}
\nonumber &
&R_{\mathbf{C}^{a_{1},b_{1}},\mathbf{C}^{a_{2},b_{2}}}(\tau)\\
\nonumber 
&=&\sum_{l=0}^{q-1}\sum_{t=0}^{q-2}\chi_{1}((a_{1}-a_{2})\alpha^{t}d_{l})\chi_{1}((b_{1}-b_{2})\alpha^{t})\\
&=&\sum_{t=0}^{q-2}\chi_{1}((b_{1}-b_{2})\alpha^{t})\sum_{x\in \gf_q}\chi_{1}((a_{1}-a_{2})\alpha^{t}x).
\end{eqnarray*}
By the orthogonal relation of the additive characters, we have
\begin{eqnarray*}
R_{\mathbf{C}^{a_{1},b_{1}},\mathbf{C}^{a_{2},b_{2}}}(\tau)=0.
\end{eqnarray*}

{Case 4}: If  $\tau\neq0$, then $1-\alpha^{\tau}\neq0$. According to Lemma \ref{lem-charactersum}, we have 
\begin{eqnarray*}
\nonumber &
&R_{\mathbf{C}^{a_{1},b_{1}},\mathbf{C}^{a_{2},b_{2}}}(\tau)\\
\nonumber
&=&\sum_{t=0}^{q-2}\sum_{x\in \gf_q}\chi_{1}((1-\alpha^{\tau})\alpha^{t}x^{2}+(a_{1}-a_{2}\alpha^{\tau})\alpha^{t}x+(b_{1}-b_{2}\alpha^{\tau})\alpha^{t})\\
&=&\sum_{t=0}^{q-2}\chi_{1}\left((b_{1}-b_{2}\alpha^{\tau})\alpha^{t}-\frac{((a_{1}-a_{2}\alpha^{\tau})\alpha^{t})^2}{4(1-\alpha^{\tau})\alpha^{t}}\right)\eta((1-\alpha^{\tau})\alpha^{t})G(\eta, \chi_{1})\\
&=&G(\eta, \chi_{1})\sum_{x\in \gf_q^*}\chi_{1}\left(\left(b_{1}-b_{2}\alpha^{\tau}-\frac{(a_{1}-a_{2}\alpha^{\tau})^2}{4(1-\alpha^{\tau})}\right)x\right)\eta((1-\alpha^{\tau})x).
\end{eqnarray*}
If $b_{1}-b_{2}\alpha^{\tau}=\frac{(a_{1}-a_{2}\alpha^{\tau})^2}{4(1-\alpha^{\tau})}$, by the orthogonal relation of the  multiplicative characters, we have
$$R_{\mathbf{C}^{a_{1},b_{1}},\mathbf{C}^{a_{2},b_{2}}}(\tau)=0.$$
If $b_{1}-b_{2}\alpha^{\tau}\neq\frac{(a_{1}-a_{2}\alpha^{\tau})^2}{4(1-\alpha^{\tau})}$, then
\begin{eqnarray*}
& &R_{\mathbf{C}^{a_{1},b_{1}},\mathbf{C}^{a_{2},b_{2}}}(\tau)\\
&=&G(\eta, \chi_{1})\eta(1-\alpha^{\tau})\eta\left(b_{1}-b_{2}\alpha^{\tau}-\frac{(a_{1}-a_{2}\alpha^{\tau})^2}{4(1-\alpha^{\tau})}\right) \\
& &\cdot \sum_{x\in \gf_q^*}\chi_{1}\left(\left(b_{1}-b_{2}\alpha^{\tau}-\frac{(a_{1}-a_{2}\alpha^{\tau})^2}{4(1-\alpha^{\tau})}\right)x\right)
\eta\left(\left(b_{1}-b_{2}\alpha^{\tau}-\frac{(a_{1}-a_{2}\alpha^{\tau})^2}{4(1-\alpha^{\tau})}\right)x\right)\\
&=&G(\eta, \chi_{1})^2\eta(1-\alpha^{\tau})\eta\left(b_{1}-b_{2}\alpha^{\tau}-\frac{(a_{1}-a_{2}\alpha^{\tau})^2}{4(1-\alpha^{\tau})}\right),
\end{eqnarray*}
where $G(\eta, \chi_1) = (-1)^{n-1}(\sqrt{-1})^{(\frac{p-1}{2})^2n}\sqrt{q}$  by Lemma \ref{quadGuasssum1}. Thus,
$ |R_{\mathbf{C}^{a_{1},b_{1}},\mathbf{C}^{a_{2},b_{2}}}(\tau)|\in \{0,q\}$.

Summarizing the above four cases, we derive that the maximum periodic correlation magnitude of $\mathcal{C}$ is $q$. Next, we will show that  the parameters of the obtained periodic QCSS asymptotically achieve the correlation lower bound in (\ref{wlech}). Since $\mathcal{C}$ is a periodic $(q^2, q, q-1, q)$-QCSS, according to the bound in (\ref{wlech}), we have 
\begin{eqnarray*}
\vartheta_{\opt}=q(q-1)\sqrt{\frac{q-1}{q^2(q-1)-1)}}=q\sqrt{\frac{(q-1)^{3}}{q^2(q-1)-1)}}.
\end{eqnarray*}
It is easy to see that 
\begin{eqnarray*}
\lim_{q\rightarrow+\infty}\frac{\vartheta_{\max}}{\vartheta_{\opt}}=\lim_{q\rightarrow+\infty}\frac{q}{q\sqrt{\frac{(q-1)^{3}}{q^2(q-1)-1)}}}=1.
\end{eqnarray*}
This completes the proof of this theorem.

\end{proof}

\begin{example}\label{example1}
  Let $p=5$ and $n=2$. Then the parameters of the QCSS $\mathcal{C}$ constructed in Theorem \ref{111} are $(625, 25, 24, 25)$ and its alphabet is given hy $\{e^{2\pi\sqrt{-1}i/5}: i\in[0,4]\}$. 
  By magma program, we obtain that the periodic correlation function $R_{\mathbf{C}^{a_{1},b_{1}},\mathbf{C}^{a_{2},b_{2}}}(\tau)$ is equal to $0$ or $25$ for any $a_1,b_1,a_2,b_2\in \gf_q$ and $0\leq\tau\leq23$ except the trivial case that $a_1=a_2, b_1=b_2$ and $\tau=0$. For instance, the matrices $\mathbf{C}^{1,\alpha^0}$ and $\mathbf{C}^{1,\alpha^6}$ in $\mathcal{C}$ are presented as follows, where each entry stands for a power of $\zeta_5=e^{2\pi\sqrt{-1}/5} $. By Python program, we show the autocorrelation magnitude distribution of $\mathbf{C}^{1,\alpha^0}$ in Fig. \ref{fig1},  the autocorrelation magnitude distribution of $\mathbf{C}^{1,\alpha^6}$ in Fig. \ref{fig2}, and the correlation magnitude distribution of $\mathbf{C}^{1,\alpha^0}$ and $\mathbf{C}^{1,\alpha^6}$ in Fig. \ref{fig3}, respectively.  

\begin{eqnarray*}\label{matrix1}
\mathbf{C}^{1,\alpha^0}=\left[
\begin{array}{cccc}
3 1 0 3 3 2 1 2 0 1 1 4 2 4 0 2 2 3 4 3 0 4 4 1\\
3 3 2 1 2 0 1 1 4 2 4 0 2 2 3 4 3 0 4 4 1 3 1 0\\
2 2 3 4 3 0 4 4 1 3 1 0 3 3 2 1 2 0 1 1 4 2 4 0\\
4 2 4 0 2 2 3 4 3 0 4 4 1 3 1 0 3 3 2 1 2 0 1 1\\
2 3 4 3 0 4 4 1 3 1 0 3 3 2 1 2 0 1 1 4 2 4 0 2\\
2 2 3 4 3 0 4 4 1 3 1 0 3 3 2 1 2 0 1 1 4 2 4 0\\
2 4 0 2 2 3 4 3 0 4 4 1 3 1 0 3 3 2 1 2 0 1 1 4\\
0 1 1 4 2 4 0 2 2 3 4 3 0 4 4 1 3 1 0 3 3 2 1 2\\
0 0 0 0 0 0 0 0 0 0 0 0 0 0 0 0 0 0 0 0 0 0 0 0\\
1 0 3 3 2 1 2 0 1 1 4 2 4 0 2 2 3 4 3 0 4 4 1 3\\
3 3 2 1 2 0 1 1 4 2 4 0 2 2 3 4 3 0 4 4 1 3 1 0\\
2 3 4 3 0 4 4 1 3 1 0 3 3 2 1 2 0 1 1 4 2 4 0 2\\
1 2 0 1 1 4 2 4 0 2 2 3 4 3 0 4 4 1 3 1 0 3 3 2\\
4 3 0 4 4 1 3 1 0 3 3 2 1 2 0 1 1 4 2 4 0 2 2 3\\
2 0 1 1 4 2 4 0 2 2 3 4 3 0 4 4 1 3 1 0 3 3 2 1\\
2 0 1 1 4 2 4 0 2 2 3 4 3 0 4 4 1 3 1 0 3 3 2 1\\
0 0 0 0 0 0 0 0 0 0 0 0 0 0 0 0 0 0 0 0 0 0 0 0\\
4 3 0 4 4 1 3 1 0 3 3 2 1 2 0 1 1 4 2 4 0 2 2 3\\
3 1 0 3 3 2 1 2 0 1 1 4 2 4 0 2 2 3 4 3 0 4 4 1\\
2 1 2 0 1 1 4 2 4 0 2 2 3 4 3 0 4 4 1 3 1 0 3 3\\
0 1 1 4 2 4 0 2 2 3 4 3 0 4 4 1 3 1 0 3 3 2 1 2\\
2 1 2 0 1 1 4 2 4 0 2 2 3 4 3 0 4 4 1 3 1 0 3 3\\
4 2 4 0 2 2 3 4 3 0 4 4 1 3 1 0 3 3 2 1 2 0 1 1\\
1 0 3 3 2 1 2 0 1 1 4 2 4 0 2 2 3 4 3 0 4 4 1 3\\
1 2 0 1 1 4 2 4 0 2 2 3 4 3 0 4 4 1 3 1 0 3 3 2\\
\end{array}\right],\
\mathbf{C}^{1,\alpha^6}=\left[
\begin{array}{cccc}
4 3 0 4 4 1 3 1 0 3 3 2 1 2 0 1 1 4 2 4 0 2 2 3\\
4 0 2 2 3 4 3 0 4 4 1 3 1 0 3 3 2 1 2 0 1 1 4 2\\
3 4 3 0 4 4 1 3 1 0 3 3 2 1 2 0 1 1 4 2 4 0 2 2\\
0 4 4 1 3 1 0 3 3 2 1 2 0 1 1 4 2 4 0 2 2 3 4 3\\
3 0 4 4 1 3 1 0 3 3 2 1 2 0 1 1 4 2 4 0 2 2 3 4\\
3 4 3 0 4 4 1 3 1 0 3 3 2 1 2 0 1 1 4 2 4 0 2 2\\
3 1 0 3 3 2 1 2 0 1 1 4 2 4 0 2 2 3 4 3 0 4 4 1\\
1 3 1 0 3 3 2 1 2 0 1 1 4 2 4 0 2 2 3 4 3 0 4 4\\
1 2 0 1 1 4 2 4 0 2 2 3 4 3 0 4 4 1 3 1 0 3 3 2\\
2 2 3 4 3 0 4 4 1 3 1 0 3 3 2 1 2 0 1 1 4 2 4 0\\
4 0 2 2 3 4 3 0 4 4 1 3 1 0 3 3 2 1 2 0 1 1 4 2\\
3 0 4 4 1 3 1 0 3 3 2 1 2 0 1 1 4 2 4 0 2 2 3 4\\
2 4 0 2 2 3 4 3 0 4 4 1 3 1 0 3 3 2 1 2 0 1 1 4\\
0 0 0 0 0 0 0 0 0 0 0 0 0 0 0 0 0 0 0 0 0 0 0 0\\
3 2 1 2 0 1 1 4 2 4 0 2 2 3 4 3 0 4 4 1 3 1 0 3\\
3 2 1 2 0 1 1 4 2 4 0 2 2 3 4 3 0 4 4 1 3 1 0 3\\
1 2 0 1 1 4 2 4 0 2 2 3 4 3 0 4 4 1 3 1 0 3 3 2\\
0 0 0 0 0 0 0 0 0 0 0 0 0 0 0 0 0 0 0 0 0 0 0 0\\
4 3 0 4 4 1 3 1 0 3 3 2 1 2 0 1 1 4 2 4 0 2 2 3\\
3 3 2 1 2 0 1 1 4 2 4 0 2 2 3 4 3 0 4 4 1 3 1 0\\
1 3 1 0 3 3 2 1 2 0 1 1 4 2 4 0 2 2 3 4 3 0 4 4\\
3 3 2 1 2 0 1 1 4 2 4 0 2 2 3 4 3 0 4 4 1 3 1 0\\
0 4 4 1 3 1 0 3 3 2 1 2 0 1 1 4 2 4 0 2 2 3 4 3\\
2 2 3 4 3 0 4 4 1 3 1 0 3 3 2 1 2 0 1 1 4 2 4 0\\
2 4 0 2 2 3 4 3 0 4 4 1 3 1 0 3 3 2 1 2 0 1 1 4\\
\end{array}\right].
\end{eqnarray*}

\begin{figure}[htbp]
\centering
\includegraphics[width=0.6\columnwidth,height=0.4\linewidth]{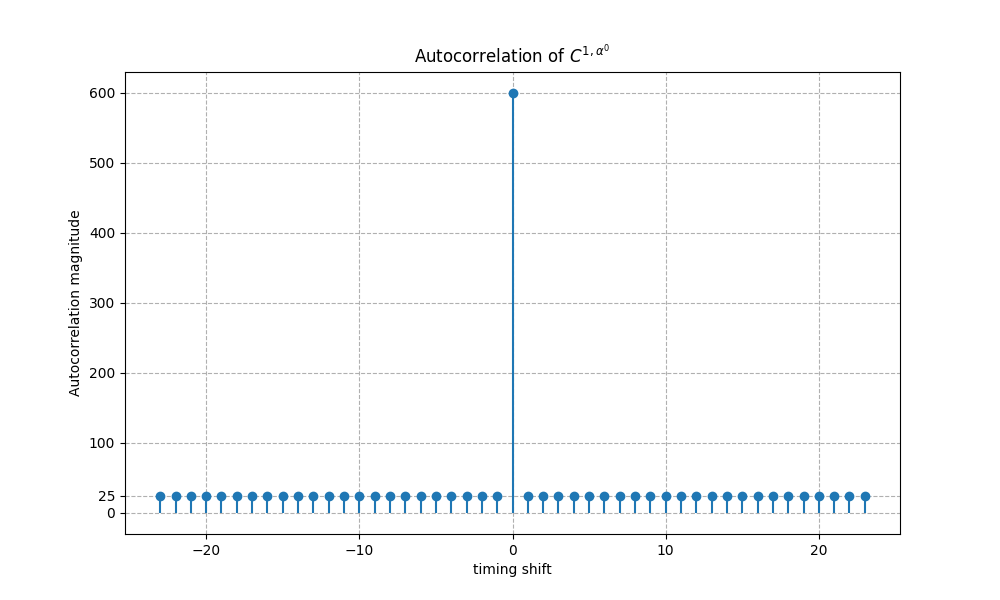}
\caption{The autocorrelation magnitude  distribution of $\mathbf{C}^{1,\alpha^0}$ in Example \ref{example1}}
\label{fig1}
\end{figure}

\begin{figure}[htbp]
\centering
\includegraphics[width=0.6\columnwidth,height=0.4\linewidth]{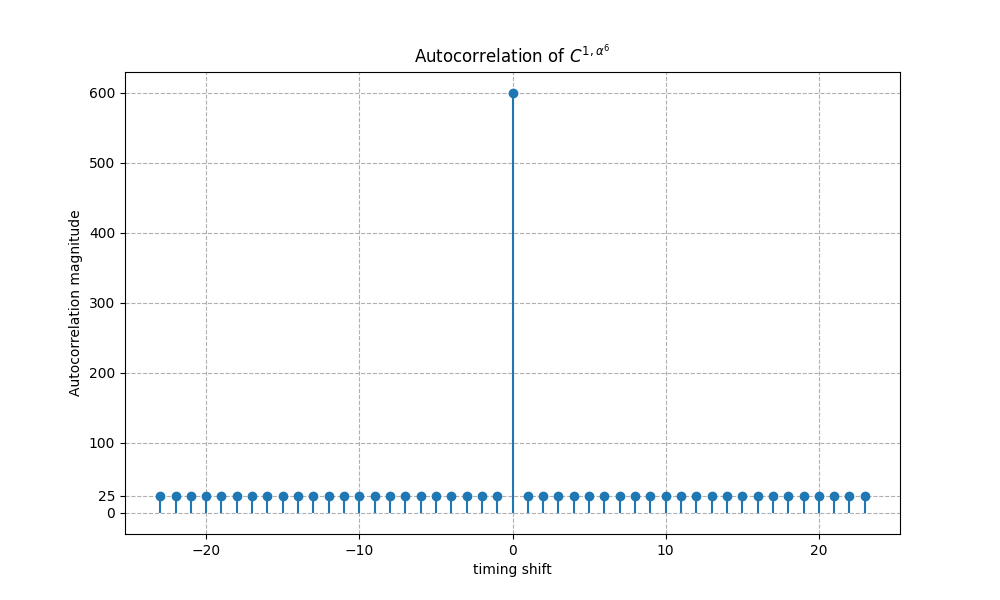}
\caption{\centering{The autocorrelation magnitude distribution of $\mathbf{C}^{1,\alpha^6}$} in Example \ref{example1}}
\label{fig2}
\end{figure}

\begin{figure}[htbp]
\centering
\includegraphics[width=0.6\columnwidth,height=0.4\linewidth]{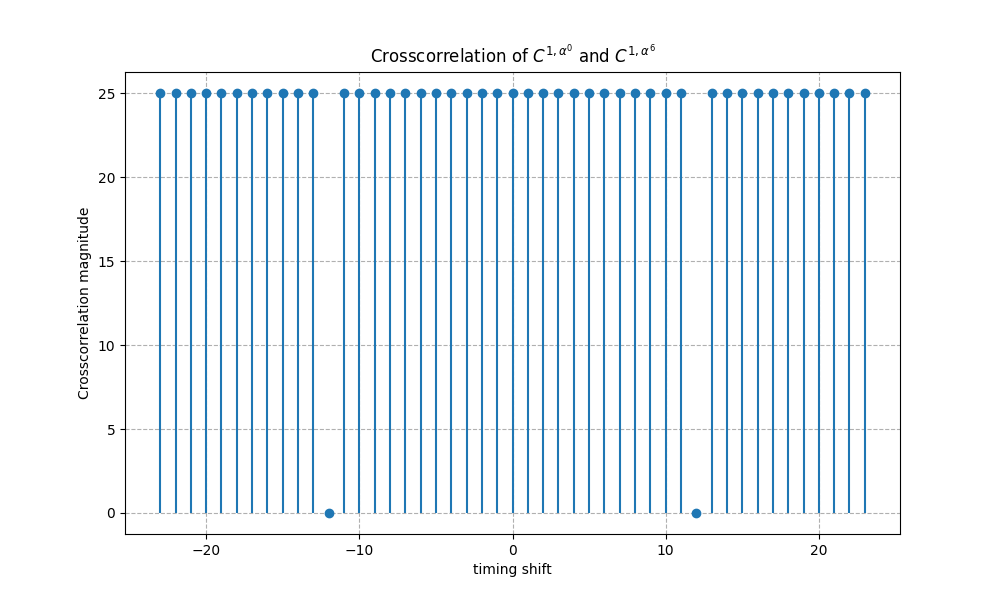}
\caption{\centering The correlation magnitude distribution of $\mathbf{C}^{1,\alpha^0}$ and $\mathbf{C}^{1,\alpha^6}$ in Example \ref{example1}}
\label{fig3}
\end{figure}

\end{example}
Secondly, we consider the case that $p=2$ in the following.
\begin{theorem}\label{222}
Let $q=2^n$ for a positive integer $n$. Let $\mathcal{C}$ be the complementary sequence set defined in (\ref{eq-c1}). Then $\mathcal{C}$ is a periodic $(q^2, q, q-1, q)$-QCSS with alphabet size $2$ which is asymptotically optimal with respect
to the correlation lower bound in (\ref{wlech}).
\end{theorem}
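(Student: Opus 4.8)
The plan is to run exactly the same case analysis as in the proof of Theorem \ref{111}, but with the completing-the-square step (Lemma \ref{lem-charactersum}, which is valid only in odd characteristic) replaced by the characteristic-$2$ input of Lemma \ref{lem-x^2root}. For $\mathbf{C}^{a_1,b_1},\mathbf{C}^{a_2,b_2}\in\mathcal{C}$ and $\tau\in[0,q-2]$, since $-1=1$ in $\gf_q$ we have $\overline{\chi_1}=\chi_1$ and $1-\alpha^\tau=1+\alpha^\tau$, so
\[
R_{\mathbf{C}^{a_1,b_1},\mathbf{C}^{a_2,b_2}}(\tau)=\sum_{l=0}^{q-1}\sum_{t=0}^{q-2}\chi_1\!\big((1+\alpha^\tau)\alpha^t d_l^{2}+(a_1+a_2\alpha^\tau)\alpha^t d_l+(b_1+b_2\alpha^\tau)\alpha^t\big).
\]

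First I would treat $\tau=0$. Here $1+\alpha^{0}=0$, so the quadratic term in $d_l$ disappears and the sum collapses to $\sum_{l}\sum_{t}\chi_1\big((a_1+a_2)\alpha^t d_l+(b_1+b_2)\alpha^t\big)$. Excluding the trivial autocorrelation case $a_1=a_2,\ b_1=b_2$, orthogonality of additive characters gives $R=-q$ when $a_1=a_2$ and $b_1\neq b_2$, and $R=0$ when $a_1\neq a_2$ (summing over $d_l\in\gf_q$ first annihilates the expression), exactly as in Cases 1--3 of Theorem \ref{111}.

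Next, for $\tau\neq0$ we have $1+\alpha^\tau\neq0$, and the inner expression, as a function of $x=d_l$, is a genuine quadratic $Q_t(x)=A_tx^{2}+B_tx+C_t$ with $A_t=(1+\alpha^\tau)\alpha^t\neq0$, $B_t=(a_1+a_2\alpha^\tau)\alpha^t$ and $C_t=(b_1+b_2\alpha^\tau)\alpha^t$. The crux is to evaluate $\sum_{x\in\gf_q}\chi_1(Q_t(x))$ in characteristic $2$. I would argue: if $B_t\neq0$, then by Lemma \ref{lem-x^2root} the equation $Q_t(x)=v$ has $2$ solutions when $\tr_{q/2}\!\big(A_t(C_t+v)/B_t^{2}\big)=0$ and none otherwise, so substituting $w=A_t(C_t+v)/B_t^{2}$ and using additive-character orthogonality on the subgroup $\{\tr_{q/2}(w)=0\}$ yields $\sum_{x\in\gf_q}\chi_1(Q_t(x))=q\,\chi_1(C_t)$ if $A_t=B_t^{2}$ and $0$ otherwise; if $B_t=0$, then $x\mapsto x^{2}$ is a bijection of $\gf_q$ and $\sum_{x\in\gf_q}\chi_1(A_tx^{2})=0$. (Equivalently, $x\mapsto A_tx^{2}+B_tx$ is $\gf_2$-linear and $\tr_{q/2}(A_tx^{2})=\tr_{q/2}(A_t^{1/2}x)$.) The condition $B_t=0$ is independent of $t$, namely $a_1=a_2\alpha^\tau$, in which case $R=0$; otherwise $A_t=B_t^{2}$ reduces to $\alpha^t=(1+\alpha^\tau)(a_1+a_2\alpha^\tau)^{-2}$, which---$\alpha$ being primitive and the right side nonzero---holds for exactly one $t_0\in[0,q-2]$, giving $R=q\,\chi_1(C_{t_0})$ and hence $|R|=q$.

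Putting the cases together, $\vartheta_{\max}=q$ (the value $q$ is attained, e.g. in the subcase $\tau=0,\ a_1=a_2,\ b_1\neq b_2$), the entries $\chi_1(\cdot)\in\{1,-1\}$ give alphabet size $2$, and the asymptotic-optimality limit $\lim_{q\to+\infty}\vartheta_{\max}/\vartheta_{\opt}=1$ is computed exactly as in Theorem \ref{111}. I expect the only real obstacle to be the characteristic-$2$ evaluation of the quadratic character sum $\sum_{x\in\gf_q}\chi_1(A_tx^{2}+B_tx+C_t)$: completing the square is unavailable, so one must instead exploit the $\gf_2$-linearity of $x\mapsto A_tx^{2}+B_tx$ together with $\tr_{q/2}(y^{2})=\tr_{q/2}(y)$, or equivalently the zero-count in Lemma \ref{lem-x^2root}; everything else is a routine transcription of the odd-characteristic argument.
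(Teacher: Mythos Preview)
Your proof is correct, but the route for $\tau\neq0$ differs from the paper's. The paper swaps the order of summation: writing $y=\alpha^{t}$, it considers
\[
\sum_{x\in\gf_q}\sum_{y\in\gf_q^*}\chi_1\!\big(f(x)\,y\big),\qquad f(x)=(1+\alpha^\tau)x^{2}+(a_1+a_2\alpha^\tau)x+(b_1+b_2\alpha^\tau),
\]
and by additive orthogonality this equals $N_f(q-1)-(q-N_f)= (N_f-1)q$, where $N_f$ is the number of zeros of $f$ in $\gf_q$; Lemma \ref{lem-x^2root} then gives $N_f\in\{0,1,2\}$ directly, so $R\in\{-q,0,q\}$. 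You instead keep the original order, evaluate $\sum_{x\in\gf_q}\chi_1(Q_t(x))$ for each fixed $t$ via the $\gf_2$-linearity of $x\mapsto A_tx^{2}+B_tx$ (equivalently, $\tr_{q/2}(A_tx^{2})=\tr_{q/2}(A_t^{1/2}x)$), and then locate the unique $t_0$ with $A_{t_0}=B_{t_0}^{2}$ at which the inner sum is nonzero. The paper's swap-then-count-roots device is slightly cleaner---it avoids evaluating a quadratic exponential sum in characteristic $2$ altogether and is exactly the technique reused in Theorems \ref{444} and \ref{333}---whereas your argument is the more faithful characteristic-$2$ analogue of the completing-the-square step in Theorem \ref{111} and produces the explicit value $R=q\,\chi_1(C_{t_0})$ rather than a case split on a trace condition.
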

\begin{proof}
For any two  complementary sequences $\mathbf{C}^{a_{1},b_{1}}$, $\mathbf{C}^{a_{2},b_{2}}$ in $\mathcal{C}$ and $\tau\in[0,q-2]$, we have
\begin{eqnarray*}
\nonumber &
&R_{\mathbf{C}^{a_{1},b_{1}},\mathbf{C}^{a_{2},b_{2}}}(\tau)\\
\nonumber 
&=&\sum_{l=0}^{q-1}R_{\mathbf{c}_{l}^{a_{1},b_{1}},\mathbf{c}_{l}^{a_{2},b_{2}}}(\tau)
\\
\nonumber 
&=&\sum_{l=0}^{q-1}\sum_{t=0}^{q-2}\chi_{1}\left(\alpha^{t}(d_{l}^2+a_{1}d_{l}+b_{1})\right)\overline{\chi_{1}}\left(\alpha^{t+\tau}(d_{l}^2+a_{2}d_{l}+b_{2})\right)
\\
&=&\sum_{l=0}^{q-1}\sum_{t=0}^{q-2}\chi_{1}\left((1-\alpha^{\tau})\alpha^{t}d_{l}^{2}+(a_{1}-a_{2}\alpha^{\tau})\alpha^{t}d_{l}+(b_{1}-b_{2}\alpha^{\tau})\alpha^{t}\right),
\end{eqnarray*}
where $a_1,b_1,a_2,b_2\in \gf_q$.
Then we consider the following cases to determine the value distribution  of $R_{\mathbf{C}^{a_{1},b_{1}},\mathbf{C}^{a_{2},b_{2}}}(\tau)$.

Case 1: If $\tau=0$, similarly to the proof of Theorem \ref{111}, we have
 \begin{eqnarray*}
R_{\mathbf{C}^{a_{1},b_{1}},\mathbf{C}^{a_{2},b_{2}}}(\tau)=
\left\{
\begin{array}{ll}
-q & \text{if} ~ a_1=a_2, b_1\neq b_2,\\
0 & \text{if} ~ a_1\neq a_2,b_1=b_2,\\
0 &  \text{if} ~ a_1\neq a_2, b_1\neq b_2.
\end{array}\right.
\end{eqnarray*}

Case 2: If $\tau\neq0$ and $a_1=a_2\alpha^{\tau}$, then
\begin{eqnarray*}
\nonumber &
&R_{\mathbf{C}^{a_{1},b_{1}},\mathbf{C}^{a_{2},b_{2}}}(\tau)\\
\nonumber 
&=&\sum_{l=0}^{q-1}\sum_{t=0}^{q-2}\chi_{1}\left((1-\alpha^{\tau})\alpha^{t}d_{l}^{2}+(b_{1}-b_{2}\alpha^{\tau})\alpha^{t}\right)\\
&=&\sum_{t=0}^{q-2}\sum_{x\in \gf_q}\chi_{1}\left(\left((1-\alpha^{\tau})x^{2}+(b_{1}-b_{2}\alpha^{\tau})\right)\alpha^{t}\right)\\
&=&\sum_{x\in \gf_q}\sum_{y\in \gf_q^*}\chi_{1}\left(\left((1-\alpha^{\tau})x^{2}+(b_{1}-b_{2}\alpha^{\tau})\right)y\right).
\end{eqnarray*}
By Lemma \ref{lem-x^2root}, we know that $f(x)=(1-\alpha^{\tau})x^{2}+b_{1}-b_{2}\alpha^{\tau}$ has exactly one zero in $\gf_q$. By the orthogonal relation of the additive characters, we have
\begin{eqnarray*}
\nonumber &
&R_{\mathbf{C}^{a_{1},b_{1}},\mathbf{C}^{a_{2},b_{2}}}(\tau)\\
\nonumber 
&=&(q-1)+(q-1)(-1)\\
&=&0.
\end{eqnarray*}

Case 3: If $\tau\neq0$ and $a_1\neq a_2\alpha^{\tau}$, then
\begin{eqnarray*}
\nonumber &
&R_{\mathbf{C}^{a_{1},b_{1}},\mathbf{C}^{a_{2},b_{2}}}(\tau)\\
\nonumber 
&=&\sum_{x\in \gf_q}\sum_{y\in \gf_q^*}\chi_{1}\left((1-\alpha^{\tau})yx^{2}+(a_{1}-a_{2}\alpha^{\tau})yx+(b_{1}-b_{2}\alpha^{\tau})y\right)\\
&=&\sum_{x\in \gf_q}\sum_{y\in \gf_q^*}\chi_{1}\left(\left((1-\alpha^{\tau})x^{2}+(a_{1}-a_{2}\alpha^{\tau})x+(b_{1}-b_{2}\alpha^{\tau})\right)y\right).
\end{eqnarray*}
By Lemma \ref{lem-x^2root}, we know that $f(x)=(1-\alpha^{\tau})x^{2}+(a_{1}-a_{2}\alpha^{\tau})x+(b_{1}-b_{2}\alpha^{\tau})$ has exactly two zeros if $\tr_{q/2}\left(\frac{(1-\alpha^{\tau})(b_1-b_2\alpha^{\tau})}{(a_1-a_2\alpha^{\tau})^2}\right)=0$, and no zero if $\tr_{q/2}\left(\frac{(1-\alpha^{\tau})(b_1-b_2\alpha^{\tau})}{(a_1-a_2\alpha^{\tau})^2}\right)=1$.
By the orthogonal relation of the additive characters, we have
\begin{eqnarray*}
R_{\mathbf{C}^{a_{1},b_{1}},\mathbf{C}^{a_{2},b_{2}}}(\tau)=
\left\{
\begin{array}{ll}
-q & \text{if} ~ \tr_{q/2}\left(\frac{(1-\alpha^{\tau})(b_1-b_2\alpha^{\tau})}{(a_1-a_2\alpha^{\tau})^2}\right)=1,\\
q & \text{if} ~ \tr_{q/2}\left(\frac{(1-\alpha^{\tau})(b_1-b_2\alpha^{\tau})}{(a_1-a_2\alpha^{\tau})^2}\right)=0.
\end{array}\right.
\end{eqnarray*}

Based on all the cases discussed above, we deduce that the maximum periodic correlation magnitude of $\mathcal{C}$ is $q$.
Then $\mathcal{C}$ is a periodic $(q^2, q, q-1, q)$-QCSS which is asymptotically optimal with respect to the correlation bound in  (\ref{wlech}) by a similar proof as that of 
Theorem \ref{111}. This completes the proof of this theorem.
\end{proof}
\begin{example}\label{example2}
Let $p=2$ and $n=4$. Then the parameters of the QCSS $\mathcal{C}$ constructed in Theorem \ref{222} are $(256, 16, 15, 16)$ and its alphabet is given by $\{(-1)^i: i\in[0,1]\}$. By Magma program, we verify that the periodic correlation function $R_{\mathbf{C}^{a_{1},b_{1}},\mathbf{C}^{a_{2},b_{2}}}(\tau)$ is equal to $0$ or $16$ for any $a_1,b_1,a_2,b_2\in \gf_q$ and $0\leq\tau\leq14$ except the trivial case that $a_1=a_2, b_1=b_2$ and $\tau=0$. For instance, the matrices $\mathbf{C}^{1,\alpha}$ and $\mathbf{C}^{1,\alpha^2}$ in $\mathcal{C}$ are presented as follows, where each entry stands for a power of $\zeta_2=-1$. By Python program, we show the autocorrelation magnitude distribution of $\mathbf{C}^{1,\alpha}$ in Fig. \ref{fig4},  the autocorrelation magnitude distribution of $\mathbf{C}^{1,\alpha^2}$ in Fig. \ref{fig5}, and the correlation magnitude distribution of $\mathbf{C}^{1,\alpha}$ and $\mathbf{C}^{1,\alpha^2}$ in Fig. \ref{fig6}, respectively.  
\begin{eqnarray*}\label{matrix2}
\mathbf{C}^{1,\alpha}=\left[
\begin{array}{cccc}
0 1 0 0 1 1 0 1 0 1 1 1 1 0 0\\
1 0 0 1 1 0 1 0 1 1 1 1 0 0 0\\
1 0 1 1 1 1 0 0 0 1 0 0 1 1 0\\
1 1 0 1 0 1 1 1 1 0 0 0 1 0 0\\
1 0 0 1 1 0 1 0 1 1 1 1 0 0 0\\
0 1 1 0 1 0 1 1 1 1 0 0 0 1 0\\
0 0 1 0 0 1 1 0 1 0 1 1 1 1 0\\
0 0 0 0 0 0 0 0 0 0 0 0 0 0 0\\
1 0 1 1 1 1 0 0 0 1 0 0 1 1 0\\
0 0 0 0 0 0 0 0 0 0 0 0 0 0 0\\
0 1 1 0 1 0 1 1 1 1 0 0 0 1 0\\
1 1 1 1 0 0 0 1 0 0 1 1 0 1 0\\
1 1 1 1 0 0 0 1 0 0 1 1 0 1 0\\
0 0 1 0 0 1 1 0 1 0 1 1 1 1 0\\
1 1 0 1 0 1 1 1 1 0 0 0 1 0 0\\
0 1 0 0 1 1 0 1 0 1 1 1 1 0 0\\
\end{array}\right],\
\mathbf{C}^{1,\alpha^2}=\left[
\begin{array}{cccc}
1 0 0 1 1 0 1 0 1 1 1 1 0 0 0\\
0 1 0 0 1 1 0 1 0 1 1 1 1 0 0\\
0 1 1 0 1 0 1 1 1 1 0 0 0 1 0\\
0 0 0 0 0 0 0 0 0 0 0 0 0 0 0\\
0 1 0 0 1 1 0 1 0 1 1 1 1 0 0\\
1 0 1 1 1 1 0 0 0 1 0 0 1 1 0\\
1 1 1 1 0 0 0 1 0 0 1 1 0 1 0\\
1 1 0 1 0 1 1 1 1 0 0 0 1 0 0\\
0 1 1 0 1 0 1 1 1 1 0 0 0 1 0\\
1 1 0 1 0 1 1 1 1 0 0 0 1 0 0\\
1 0 1 1 1 1 0 0 0 1 0 0 1 1 0\\
0 0 1 0 0 1 1 0 1 0 1 1 1 1 0\\
0 0 1 0 0 1 1 0 1 0 1 1 1 1 0\\
1 1 1 1 0 0 0 1 0 0 1 1 0 1 0\\
0 0 0 0 0 0 0 0 0 0 0 0 0 0 0\\
1 0 0 1 1 0 1 0 1 1 1 1 0 0 0\\
\end{array}\right].
\end{eqnarray*}

\begin{figure*}[htbp]
\centering
\includegraphics[width=0.6\columnwidth,height=0.4\linewidth]{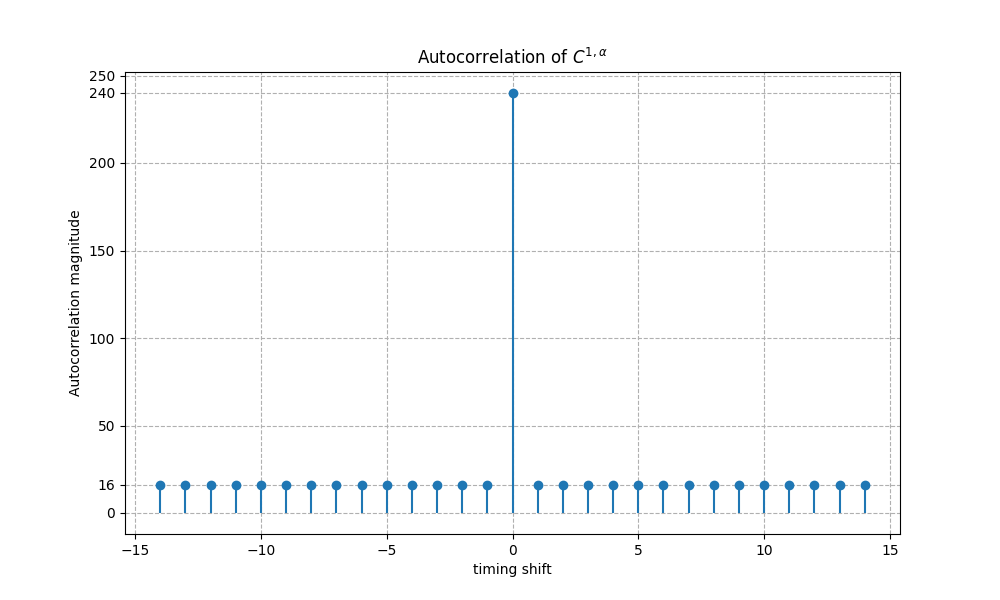}
\caption{The autocorrelation magnitude distribution of $\mathbf{C}^{1,\alpha}$ in Example \ref{example2}}
\label{fig4}
\end{figure*}
\begin{figure*}[htbp]
\centering
\includegraphics[width=0.6\columnwidth,height=0.4\linewidth]{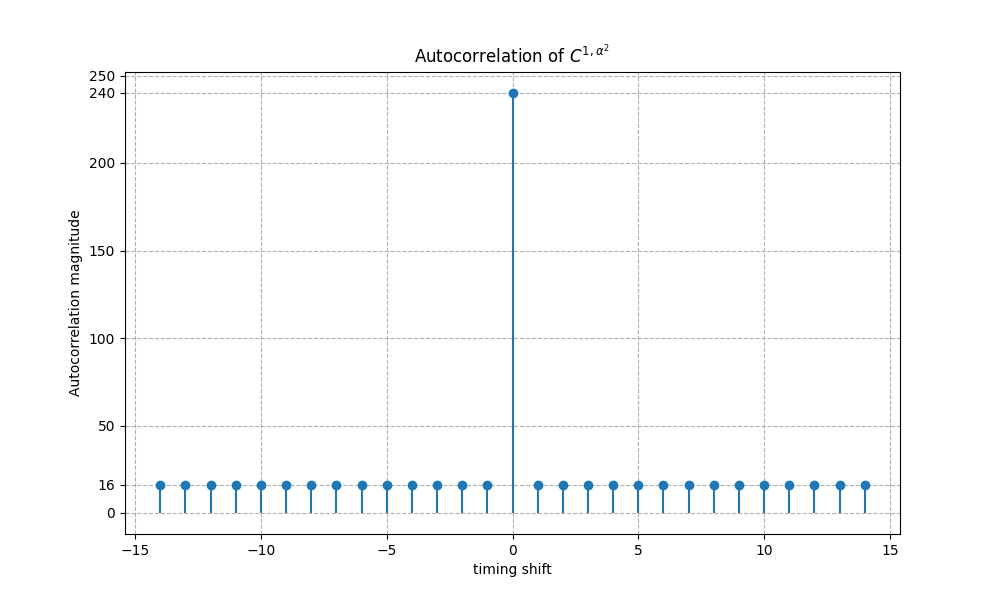}
\caption{The autocorrelation magnitude distribution of $\mathbf{C}^{1,\alpha^2}$ in Example \ref{example2}}
\label{fig5}
\end{figure*}
\begin{figure*}[htbp]
\centering
\includegraphics[width=0.6\columnwidth,height=0.4\linewidth]{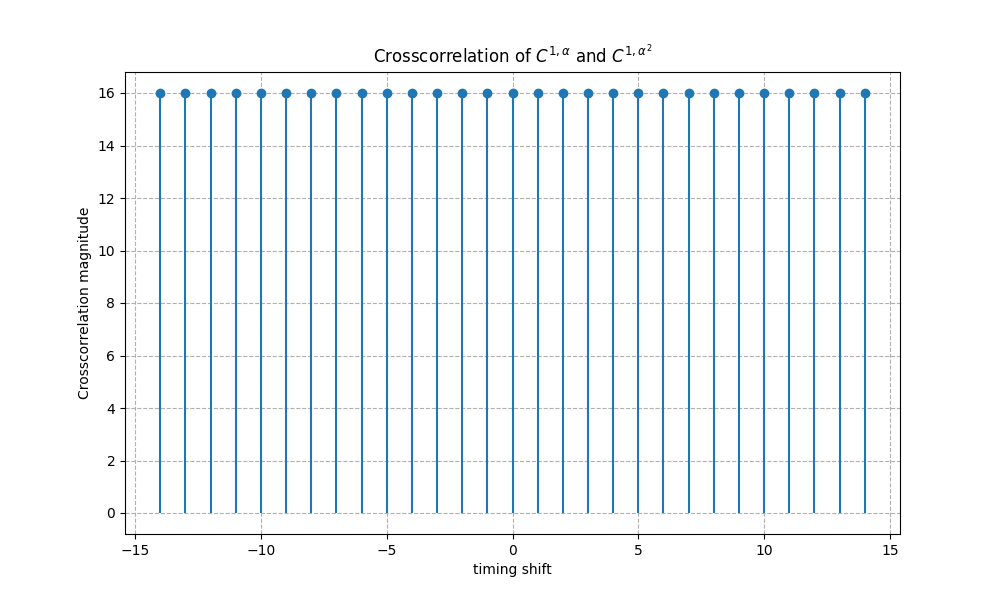}
\caption{The correlation magnitude distribution of $\mathbf{C}^{1,\alpha}$ and $\mathbf{C}^{1,\alpha^2}$ in Example \ref{example2}}
\label{fig6}
\end{figure*}

\end{example}

\subsection{The second construction of periodic QCSSs from a special polynomial }\label{secB}
In this subsection, we will use a special polynomial to construct periodic QCSSs with large set size.

Let $q=p^{n}$ with a prime $p$ and  a positive integer $n$. Let $\alpha$ be a primitive element of $\gf_q$.  Assume that $d_0, d_1,\cdots, d_{q-1}$ are all the elements of $\gf_q$, i.e.  $\gf_q=\{d_0, d_1,\cdots, d_{q-1}\}$. Let $f(x)=x^{p^k+1}+ax^{p^k}+bx+c\in \gf_q[x]$, where $a, b, c\in\gf_q$ and $\gcd(k,n)=1$.
Let $\chi_{1}$ be the canonical additive character of $\gf_q$. We define a constituent sequence $\mathbf{f}_{l}^{a,b,c}$ of period $q-1$ by
\begin{eqnarray}
\mathbf{f}_{l}^{a,b,c}=\left(\mathbf{f}_{l}^{a,b,c}(t)\right)_{t=0}^{q-2}, ~~ \mbox{where }\mathbf{f}_{l}^{a,b,c}(t)=\chi_{1}(\alpha^{t}f(d_l)) \mbox{ for } 0\leq l\leq q-1.
\end{eqnarray}
Then we obtain a two-dimensional $q\times(q-1)$ matrix
\begin{eqnarray*}
\mathbf{F}^{a,b,c}=\left[
\begin{array}{cccc}
\mathbf{f}_{0}^{a,b,c}\\
\mathbf{f}_{1}^{a,b,c}\\
\vdots,\\
\mathbf{f}_{q-1}^{a,b,c}
\end{array}\right].
\end{eqnarray*}
Such matrices form a complementary sequence set 
\begin{eqnarray}\label{eq-c}
\mathcal{F}=\left\{\mathbf{F}^{a,b,c}: a\in\gf_q, b\in \gf_q,c\in \gf_q\right\}.
\end{eqnarray}

\begin{theorem}\label{444} 
Let $p=2$ and $q=2^n$ for a positive integer $n \geq 3$.
Let $f(x)=x^{p^k+1}+ax^{p^k}+bx+c\in \gf_q[x]$, where $a, b, c\in\gf_q$ and $\gcd(k,n)=1$.
Let $\mathcal{F}$ be the complementary sequence set defined in (\ref{eq-c}). Then $\mathcal{F}$ is a periodic $(q^3, q, q-1, 2q)$-QCSS with alphabet size $2$ which  is asymptotically near-optimal with respect to the correlation lower bound in (\ref{wlech}).
\end{theorem}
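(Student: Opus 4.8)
The plan is to imitate the proofs of Theorems~\ref{111} and~\ref{222}. Namely, I would first express the periodic correlation of two members of $\mathcal{F}$ as a double exponential sum, collapse the inner sum over $t$ by the orthogonality relation of additive characters so that the correlation becomes an affine function of the number of roots of an explicit quadrinomial, and then bound that number of roots by Lemma~\ref{1112}, Lemma~\ref{1111} and Corollary~\ref{cor}.

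Concretely, for $\mathbf{F}^{a_1,b_1,c_1},\mathbf{F}^{a_2,b_2,c_2}\in\mathcal{F}$ and $\tau\in[0,q-2]$, writing out the definition and using $\overline{\chi_1}(z)=\chi_1(-z)$ gives
\[
R_{\mathbf{F}^{a_1,b_1,c_1},\mathbf{F}^{a_2,b_2,c_2}}(\tau)=\sum_{x\in\gf_q}\sum_{t=0}^{q-2}\chi_1\bigl(\alpha^{t}g(x)\bigr),
\]
where $g(x)=(1-\alpha^{\tau})x^{p^k+1}+(a_1-a_2\alpha^{\tau})x^{p^k}+(b_1-b_2\alpha^{\tau})x+(c_1-c_2\alpha^{\tau})\in\gf_q[x]$. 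Since $\sum_{t=0}^{q-2}\chi_1(\alpha^{t}z)$ equals $q-1$ for $z=0$ and $-1$ for $z\in\gf_q^{*}$, this becomes $R_{\mathbf{F}^{a_1,b_1,c_1},\mathbf{F}^{a_2,b_2,c_2}}(\tau)=q\,(N_g-1)$, where $N_g$ is the number of zeros of $g$ in $\gf_q$; this identity holds whenever $g\not\equiv 0$, which fails only in the trivial case $\tau=0$ and $(a_1,b_1,c_1)=(a_2,b_2,c_2)$. Write $R(\tau)$ for this correlation below.

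It remains to bound $N_g$. When $\tau=0$ the leading term vanishes and $g(x)=(a_1-a_2)x^{p^k}+(b_1-b_2)x+(c_1-c_2)$: if $a_1\neq a_2$ then, after normalizing, $g$ is a trinomial of the type in Lemma~\ref{1112} with $\gcd(k,n)=1$, so $N_g\in\{0,1,2\}$; if $a_1=a_2$ and $(b_1,c_1)\neq(b_2,c_2)$ then $g$ is affine with $N_g\in\{0,1\}$; hence $|R(\tau)|\le q$ here. When $\tau\neq 0$ we have $1-\alpha^{\tau}\neq 0$, and dividing $g$ by this leading coefficient puts it into exactly the form $x^{p^k+1}+ax^{p^k}+bx+c$ treated by Lemma~\ref{1111}; with $h=\gcd(k,n)=1$, $p=2$ and $n\ge 3$, Corollary~\ref{cor} yields $N_g\in\{0,1,2,3\}$, so $|R(\tau)|\le 2q$. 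Moreover Corollary~\ref{cor} supplies a triple $(a,b,c)$ with $N_g=3$; choosing $\tau=1$, $(a_2,b_2,c_2)=(0,0,0)$ and $(a_1,b_1,c_1)=(1-\alpha)(a,b,c)$ realizes this root count inside $\mathcal{F}$, so $\vartheta_{\max}=2q$ exactly. The alphabet has size $2$ because $\chi_1$ takes values in $\{1,-1\}$ when $p=2$.

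For the optimality statement, I would substitute $M=q^3$, $K=q$, $N=q-1$ into \eqref{wlech} to get $\vartheta_{\opt}=q(q-1)\sqrt{(q^2-1)/(q^3(q-1)-1)}$, whence
\[
\lim_{q\to\infty}\frac{\vartheta_{\max}}{\vartheta_{\opt}}=\lim_{q\to\infty}\frac{2q}{q(q-1)\sqrt{\dfrac{q^2-1}{q^3(q-1)-1}}}=2,
\]
which is precisely the claimed asymptotic near-optimality. The routine parts are the exponential-sum reduction and the $\tau=0$ case; the main obstacle is the case $\tau\neq 0$, where one must verify that the normalized quadrinomial always falls under the hypotheses of Lemma~\ref{1111} (both the branch $b=a^{p^k}$ and the branch $b\neq a^{p^k}$ give $N_g\le p^{h}+1=3$ once $h=1$ and $p=2$) and, more importantly, must exhibit explicit parameters achieving $N_g=3$ so that $\vartheta_{\max}$ equals $2q$ rather than merely being at most $2q$; this last point is exactly where Corollary~\ref{cor}, and behind it \cite{KHM}, is indispensable.
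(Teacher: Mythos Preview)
Your proposal is correct and follows essentially the same route as the paper: expand the correlation, collapse the sum over $t$ via additive-character orthogonality to obtain $R=q(N_g-1)$, and then bound $N_g$ using Lemma~\ref{1112} for $\tau=0$ and Corollary~\ref{cor} for $\tau\neq 0$. Your formulation $R=q(N_g-1)$ is a slightly cleaner packaging of the same case analysis the paper carries out, and your explicit choice $\tau=1$, $(a_2,b_2,c_2)=(0,0,0)$, $(a_1,b_1,c_1)=(1-\alpha)(a,b,c)$ to realize $N_g=3$ is a nice concretization of the paper's existence argument.
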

\begin{proof}
Let $p=2$, $q=2^n$ and $f(x)=x^{p^k+1}+ax^{p^k}+bx+c\in \gf_q[x]$, where $a, b, c\in\gf_q$ and $\gcd(k,n)=1$.
For any two  complementary sequences $\mathbf{F}^{a_{1},b_{1},c_{1}}$, $\mathbf{F}^{a_{2},b_{2},c_{2}}$ in $\mathcal{F}$ and $\tau\in[0,q-2]$, we have
\begin{eqnarray*}
\nonumber &
&R_{\mathbf{F}^{a_{1},b_{1},c_{1}},\mathbf{F}^{a_{2},b_{2},c_{2}}}(\tau)\\
\nonumber 
&=&\sum_{l=0}^{q-1}R_{\mathbf{f}_{l}^{a_{1},b_{1},c_{1}},\mathbf{f}_{l}^{a_{2},b_{2},c_{2}}}(\tau)
\\
\nonumber 
&=&\sum_{l=0}^{q-1}\sum_{t=0}^{q-2}\chi_{1}(\alpha^{t}(d_l^{p^k+1}+a_{1}d_{l}^{p^k}+b_{1}d_{l}+c_1))\overline{\chi_{1}}(\alpha^{t+\tau}(d_l^{p^k+1}+a_{2}d_{l}^{p^k}+b_2d_l+c_2))
\\
&=&\sum_{l=0}^{q-1}\sum_{t=0}^{q-2}\chi_{1}\left((1-\alpha^{\tau})\alpha^{t}d_{l}^{p^k+1}
+(a_1-a_2\alpha^{\tau})\alpha^{t}d_{l}^{p^k}+(b_1-b_2\alpha^{\tau})\alpha^{t}d_l+(c_1-c_2\alpha^{\tau})\alpha^{t}\right),
\end{eqnarray*}
where $a_1,b_1,c_1,a_2,b_2,c_2\in \gf_q$.

We divide into the following cases to determine the value distribution of $R_{\mathbf{F}^{a_{1},b_{1},c_{1}},\mathbf{F}^{a_{2},b_{2},c_{2}}}(\tau)$.

{Case 1}: If $\tau=0$, then 
\begin{eqnarray}\label{eqn-R}
\nonumber &
&R_{\mathbf{F}^{a_{1},b_{1},c_{1}},\mathbf{F}^{a_{2},b_{2},c_{2}}}(\tau)\\
\nonumber 
\nonumber &=&\sum_{l=0}^{q-1}\sum_{t=0}^{q-2}\chi_{1}\left((a_1-a_2)\alpha^{t}d_{l}^{p^k}+(b_1-b_2)\alpha^{t}d_l+(c_1-c_2)\alpha^{t}\right)\\
&=&\sum_{x\in \gf_q}\sum_{y\in \gf_q^*}\chi_{1}\left((a_1-a_2)yx^{p^k}+(b_1-b_2)yx+(c_1-c_2)y\right).
\end{eqnarray}
{Subcase 1.1}: If $a_1-a_2=0$ in Equation (\ref{eqn-R}), by the orthogonal relation of the additive characters, we have
\begin{eqnarray*}
\nonumber &
&R_{\mathbf{F}^{a_{1},b_{1},c_{1}},\mathbf{F}^{a_{2},b_{2},c_{2}}}(\tau)\\
\nonumber 
&=&\sum_{x\in \gf_q}\sum_{y\in \gf_q^*}\chi_{1}\left((b_1-b_2)yx+(c_1-c_2)y\right)\\
&=&\left\{
\begin{array}{ll}
-q & \text{if} ~b_1=b_2 ~\text{and} ~c_1 \neq c_2,\\
0 & \text{if} ~b_1 \neq b_2 ~\text{and} ~c_1 = c_2,\\
0 &  \text{if}~ b_1 \neq b_2 ~\text{and} ~c_1 \neq c_2.
\end{array}\right.
\end{eqnarray*}
Subcase 1.2: If $a_1-a_2 \neq0$  in Equation (\ref{eqn-R}), then
\begin{eqnarray*}
\nonumber &
&R_{\mathbf{F}^{a_{1},b_{1},c_{1}},\mathbf{F}^{a_{2},b_{2},c_{2}}}(\tau)\\
\nonumber 
&=&\sum_{x\in \gf_q}\sum_{y\in \gf_q^*}\chi_{1}\left((a_1-a_2)yx^{p^k}+(b_1-b_2)yx+(c_1-c_2)y\right)\\
&=&\sum_{x\in \gf_q}\sum_{y\in \gf_q^*}\chi_{1}\left(\left((a_1-a_2)x^{p^k}+(b_1-b_2)x+(c_1-c_2)\right)y\right).
\end{eqnarray*}
Now we consider the zeros of polynomial $g(x)=(a_1-a_2)x^{p^k}+(b_1-b_2)x+c_1-c_2$ in $\gf_{2^n}$ with $a_1-a_2 \neq0$, where $p=2$ and $\gcd(n,k)=1$.
Let $N_g$ denote the number of roots of $g(x)=0$ in $\gf_{q}$.
According to Lemma \ref{1112}, we know that $N_g\in\{0, 1, 2\}$ if $p=2$ and $\gcd(n,k)=1$.
By the orthogonal relation of the additive characters, we have
\begin{eqnarray}
\nonumber &
&R_{\mathbf{F}^{a_{1},b_{1},c_{1}},\mathbf{F}^{a_{2},b_{2},c_{2}}}(\tau)\\
\nonumber 
&=&\left\{
\begin{array}{ll}
-q & \text{if} ~N_g=0,\\
0 & \text{if} ~N_g=1,\\
q &  \text{if}~ N_g=2.
\end{array}\right.
\end{eqnarray}

{Case 2}: If $\tau\neq0$, then $1-\alpha^{\tau}\neq0$. Thus we have
\begin{eqnarray*}
\nonumber &
&R_{\mathbf{F}^{a_{1},b_{1},c_{1}},\mathbf{F}^{a_{2},b_{2},c_{2}}}(\tau)\\
\nonumber 
&=&\sum_{x\in \gf_q}\sum_{y\in \gf_q^*}\chi_{1}\left((1-\alpha^{\tau})yx^{p^k+1}
+(a_1-a_2\alpha^{\tau})yx^{p^k}+(b_1-b_2\alpha^{\tau})yx+(c_1-c_2\alpha^{\tau})y\right)\\
&=&\sum_{x\in \gf_q}\sum_{y\in \gf_q^*}\chi_{1}\left(\left((1-\alpha^{\tau})x^{p^k+1}
+(a_1-a_2\alpha^{\tau})x^{p^k}+(b_1-b_2\alpha^{\tau})x+(c_1-c_2\alpha^{\tau})\right)y\right).
\end{eqnarray*}
Now we consider the polynomial $f(x)=(1-\alpha^{\tau})x^{p^k+1}+(a_1-a_2\alpha^{\tau})x^{p^k}+(b_1-b_2\alpha^{\tau})x+c_1-c_2\alpha^{\tau}$ in $\gf_{2^n}$ with $1-\alpha^{\tau}\neq0$, where $p=2$ and $\gcd(n,k)=1$. Let $N_f$ denote the number of zeros of $f$ in $\gf_q$. 
 According to Corollary \ref{cor}, we know that $N_f\in\{0, 1, 2, 3\}$. By the orthogonal relation of additive characters, we then have
\begin{eqnarray}
\nonumber &
&R_{\mathbf{F}^{a_{1},b_{1},c_{1}},\mathbf{F}^{a_{2},b_{2},c_{2}}}(\tau)\\
\nonumber 
&=&\left\{
\begin{array}{ll}
-q & \text{if} ~N_f=0,\\
0 & \text{if} ~N_f=1,\\
q &  \text{if}~ N_f=2,\\
2q & \text{if}~N_f=3.
\end{array}\right.
\end{eqnarray}
When $(a_1,b_1,c_1,a_2,b_2,c_2)$ runs through $\gf_q^6$, each of the coefficients $a_1-a_2\alpha^{\tau}$, $b_1-b_2\alpha^{\tau}$, $a_1-a_2\alpha^{\tau}$ of $f(x)$ also runs through $\gf_q$. By Corollary \ref{cor}, there exists $(a_1,b_1,c_1,a_2,b_2,c_2)\in \gf_q^6$ such that $N_f=3$.

Based on all these cases discussed above, the maximum periodic correlation magnitude of $\mathcal{F}$ is $2q$.
Thus $\mathcal{F}$ is a periodic $(q^3, q, q-1, 2q)$-QCSS. According to the correlation lower bound in (\ref{wlech}), we have
\begin{eqnarray*}
\vartheta_{\opt}=q(q-1)\sqrt{\frac{q^2-1}{q^3(q-1)-1}}=q\sqrt{\frac{(q-1)^{3}(q+1)}{q^3(q-1)-1}}.
\end{eqnarray*}
It is easy to see that 
\begin{eqnarray*}
\lim_{q\rightarrow+\infty}\frac{\vartheta_{\max}}{\vartheta_{\opt}}=\lim_{q\rightarrow+\infty}\frac{2q}{q\sqrt{\frac{(q-1)^{3}(q+1)}{q^3(q-1)-1)}}}=2.
\end{eqnarray*}
Then $\mathcal{F}$ is asymptotically near-optimal with respect to the correlation lower bound in (\ref{wlech}).
This completes the proof of this theorem.\\
\end{proof}

\begin{example}\label{example3}
Let $p=2$ and $n=3$. Then the parameters of the QCSS $\mathcal{F}$ constructed in Theorem \ref{444} are $(512, 8, 7, 16)$ and its alphabet is given by $\{(-1)^i: i\in[0,1]\}$. By Magma program, we verify that the periodic correlation function $R_{\mathbf{F}^{a_{1},b_{1},c_{1}},\mathbf{F}^{a_{2},b_{2},c_{1}}}(\tau)$ is equal to $0$, $8$ or $16$ for any $a,b,c\in \gf_q$ and $0\leq\tau\leq 6$ except the trivial case that $a_1=a_2, b_1=b_2,c_1=c_2$ and $\tau=0$. For instance, the matrices $\mathbf{F}^{1,\alpha,\alpha}$ and $\mathbf{F}^{1,\alpha,\alpha^2}$ in $\mathcal{F}$ are presented as follows, where each entry stands for a power of $\zeta_2=-1$.  By Python program, we show the autocorrelation magnitude distribution of $\mathbf{F}^{1,\alpha,\alpha}$ in Fig. \ref{fig7},  the autocorrelation magnitude  distribution of $\mathbf{F}^{1,\alpha,\alpha^2}$ in Fig. \ref{fig8}, and the correlation magnitude distribution of $\mathbf{F}^{1,\alpha,\alpha}$ and $\mathbf{F}^{1,\alpha,\alpha^2}$ in Fig. \ref{fig9}, respectively.  
\end{example}
\begin{eqnarray*}\label{matrix2}
\mathbf{F}^{1,\alpha,\alpha}=\left[
\begin{array}{cccc}
0 0 0 0 0 0 0\\
0 0 1 0 1 1 1\\
0 1 0 1 1 1 0\\
1 0 0 1 0 1 1\\
0 0 0 0 0 0 0\\
1 1 1 0 0 1 0\\
0 1 0 1 1 1 0\\
0 1 0 1 1 1 0\\
\end{array}\right],\
\mathbf{F}^{1,\alpha,\alpha^2}=\left[
\begin{array}{cccc}
1 1 1 0 0 1 0\\
1 1 0 0 1 0 1\\
1 0 1 1 1 0 0\\
0 1 1 1 0 0 1\\
1 1 1 0 0 1 0\\
0 0 0 0 0 0 0\\
1 0 1 1 1 0 0\\
1 0 1 1 1 0 0\\
\end{array}\right].
\end{eqnarray*}
\begin{figure*}[htbp]
\centering
\includegraphics[width=0.6\columnwidth,height=0.4\linewidth]{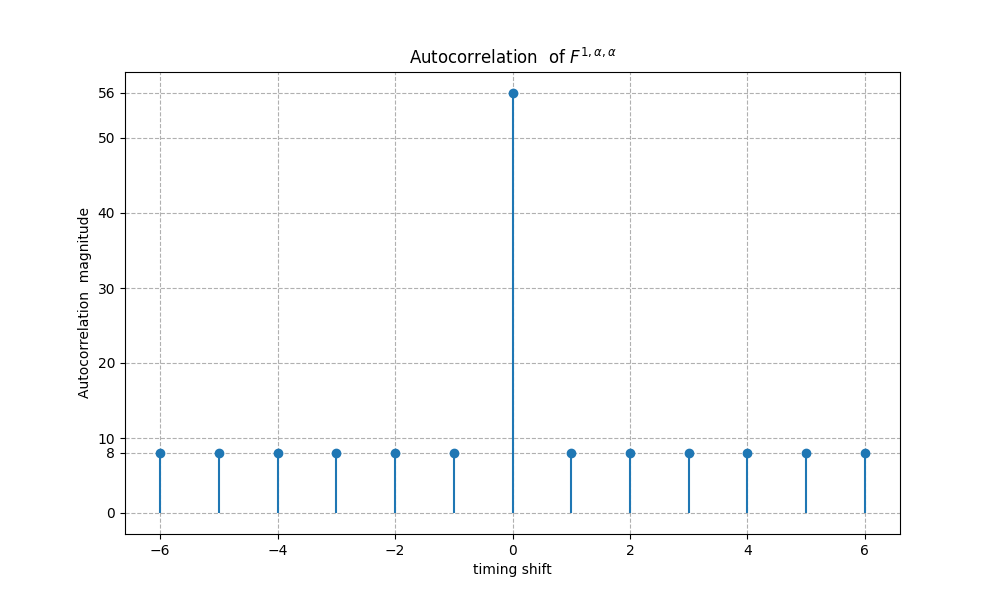}
\caption{The autocorrelation magnitude distribution of $\mathbf{F}^{1,\alpha,\alpha}$ in Example \ref{example3}}
\label{fig7}
\end{figure*}
\begin{figure*}[htbp]
\centering
\includegraphics[width=0.6\columnwidth,height=0.4\linewidth]{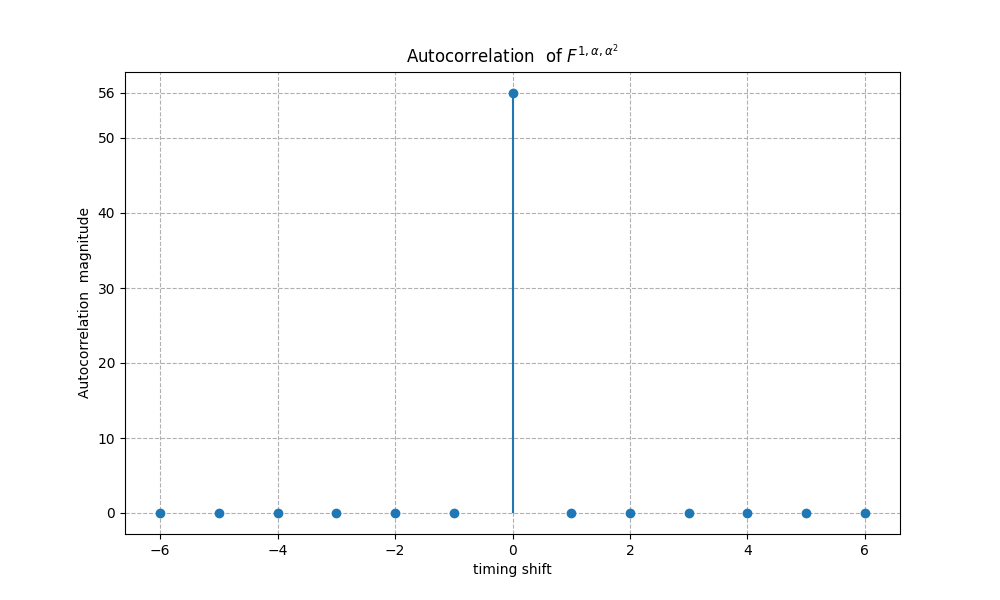}
\caption{The autocorrelation  magnitude distribution of $\mathbf{F}^{1,\alpha,\alpha^2}$ in Example \ref{example3}}
\label{fig8}
\end{figure*}
\begin{figure*}[htbp]
\centering
\includegraphics[width=0.6\columnwidth,height=0.4\linewidth]{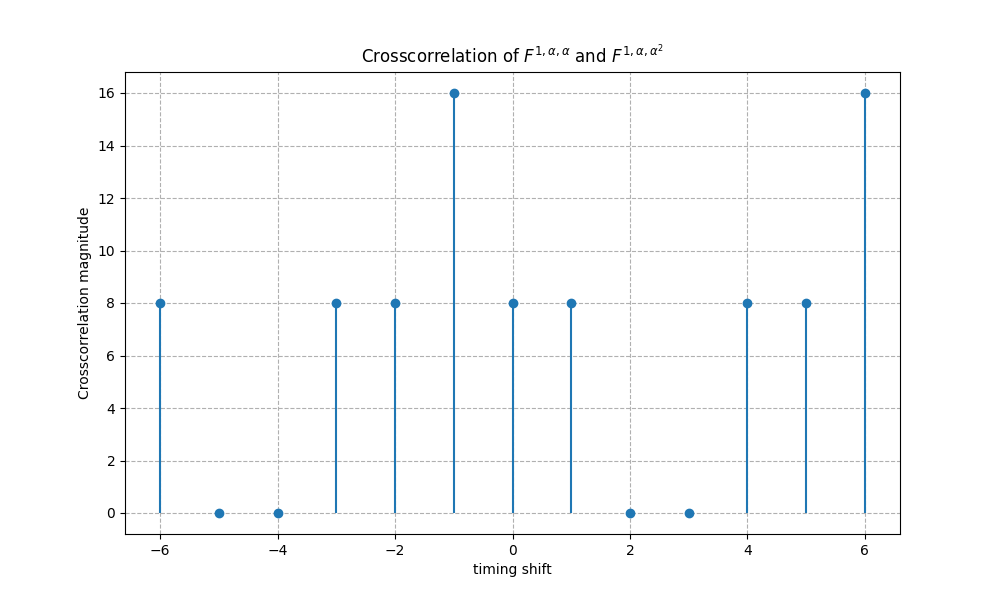}
\caption{The correlation magnitude distribution of $\mathbf{F}^{1,\alpha,\alpha}$ and $\mathbf{F}^{1,\alpha,\alpha^2}$ in Example \ref{example3}}
\label{fig9}
\end{figure*}

\subsection{The third construction of periodic QCSSs from a cubic polynomial}
In this subsection, we use a cubic polynomial to construct periodic QCSSs with large set size.
Let $q=p^n$ with an odd prime $p$ and a positive integer $n$. Let $\alpha$ be a primitive element of $\gf_q$. Assume that $d_0, d_1,\ldots, d_{q-1}$ are all the elements of $\gf_q$, i.e. $\gf_q=\{d_0, d_1,\ldots, d_{q-1}\}$. Let $f(x)=x^3+ax^2+bx+c\in \gf_q[x]$, where $a,b,c\in \gf_q$. Let $\chi_1$ be the canonical additive character of $\gf_q$. We define a constituent sequence $\mathbf{h}_{l}^{a,b,c}$ of period $q-1$ by 
\begin{eqnarray*}
\mathbf{h}_{l}^{a,b,c}=\left(\mathbf{h}_{l}^{a,b,c}(t)\right)_{t=0}^{q-2}, ~~ \text{where}\ \mathbf{h}_{l}^{a,b,c}(t)=\chi_{1}\left(\alpha^{t}f(d_l)\right)\mbox{ for }0\leq l\leq q-1.
\end{eqnarray*}
Then we obtain a two-dimensional $q\times(q-1)$ matrix 
\begin{eqnarray*}
\mathbf{H}^{a,b,c}=\left[
\begin{array}{cccc}
\mathbf{h}_{0}^{a,b,c}\\
\mathbf{h}_{l}^{a,b,c}\\
\vdots\\
\mathbf{h}_{q-1}^{a,b,c}\\
\end{array}\right].
\end{eqnarray*}
Such matrices 
constitute a complementary sequence set
\begin{eqnarray}\label{eq-c3}
\mathcal{H}=\left\{\mathbf{H}^{a,b,c}:a\in \gf_q, b\in \gf_q, c\in \gf_q\right\}.
\end{eqnarray}
\begin{theorem}\label{333}
Let $q=p^n$ with an odd prime $p$ and a positive integer $n$. Let $f(x)=x^3+ax^2+bx+c\in \gf_q[x]$, where $a,b,c\in \gf_q$. Let $\mathcal{H}$ be the complementary sequence set defined in (\ref{eq-c3}). Then $\mathcal{H}$ is a periodic $(q^3, q, q-1, 2q)$-QCSS with alphabet size $p$ which is asymptotically near-optimal with respect to the correlation lower bound in(\ref{wlech}).
\end{theorem}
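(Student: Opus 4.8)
The plan is to follow the pattern of the proofs of Theorems~\ref{222} and~\ref{444}, the essential simplification being that the polynomial governing the correlation here is a genuine cubic, whose number of zeros in $\gf_q$ is bounded by the elementary fact that a degree-$3$ polynomial over a field has at most $3$ roots; no analogue of Corollary~\ref{cor} is needed. First I would expand the periodic correlation: writing $f_i(x)=x^3+a_ix^2+b_ix+c_i$, for $\mathbf{H}^{a_1,b_1,c_1},\mathbf{H}^{a_2,b_2,c_2}\in\mathcal{H}$ and $\tau\in[0,q-2]$,
\begin{eqnarray*}
R_{\mathbf{H}^{a_1,b_1,c_1},\mathbf{H}^{a_2,b_2,c_2}}(\tau)
&=&\sum_{l=0}^{q-1}\sum_{t=0}^{q-2}\chi_{1}\big(\alpha^t f_1(d_l)\big)\overline{\chi_{1}}\big(\alpha^{t+\tau}f_2(d_l)\big)\\
&=&\sum_{x\in\gf_q}\sum_{y\in\gf_q^*}\chi_{1}\big(y\,g_\tau(x)\big),
\end{eqnarray*}
where
\[
g_\tau(x)=(1-\alpha^\tau)x^3+(a_1-a_2\alpha^\tau)x^2+(b_1-b_2\alpha^\tau)x+(c_1-c_2\alpha^\tau),
\]
using that $d_0,\dots,d_{q-1}$ enumerate $\gf_q$ and $\alpha^0,\dots,\alpha^{q-2}$ enumerate $\gf_q^*$. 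By the orthogonal relation of additive characters, the inner sum is $q-1$ if $g_\tau(x)=0$ and $-1$ otherwise, so $R_{\mathbf{H}^{a_1,b_1,c_1},\mathbf{H}^{a_2,b_2,c_2}}(\tau)=(N_{g_\tau}-1)q$, where $N_{g_\tau}$ denotes the number of zeros of $g_\tau$ in $\gf_q$.

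Next I would split on $\tau$. If $\tau=0$ and $(a_1,b_1,c_1)\ne(a_2,b_2,c_2)$, then $g_0=(a_1-a_2)x^2+(b_1-b_2)x+(c_1-c_2)$ is a nonzero polynomial of degree at most $2$, so $N_{g_0}\in\{0,1,2\}$ and the correlation magnitude is at most $q$ (the omitted choice $(a_1,b_1,c_1)=(a_2,b_2,c_2)$, $\tau=0$ yields the trivial peak $(q-1)q$, which is excluded from $\vartheta_{\max}$). If $\tau\ne0$, then $1-\alpha^\tau\ne0$, so $g_\tau$ has degree exactly $3$, whence $N_{g_\tau}\in\{0,1,2,3\}$ and $R\in\{-q,0,q,2q\}$, so the magnitude is at most $2q$. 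To confirm $2q$ is actually attained, I would note that for a fixed $\tau\ne0$ the coefficients $a_1-a_2\alpha^\tau$, $b_1-b_2\alpha^\tau$, $c_1-c_2\alpha^\tau$ range freely over $\gf_q$ as the parameters vary, so one may force $g_\tau=(1-\alpha^\tau)(x^3-x)=(1-\alpha^\tau)x(x-1)(x+1)$, which has the three pairwise distinct roots $0,1,-1$ in $\gf_q$ (here $p$ is odd, so $0,1,-1$ are distinct). Hence $\vartheta_{\max}=2q$.

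Finally, with $M=q^3$, $K=q$, $N=q-1$ and $\vartheta_{\max}=2q$, substituting into (\ref{wlech}) gives $\vartheta_{\opt}=q(q-1)\sqrt{\frac{q^2-1}{q^3(q-1)-1}}$, and $\lim_{q\to\infty}\vartheta_{\max}/\vartheta_{\opt}=2$, exactly as in the proof of Theorem~\ref{444}; since the alphabet has size $p$, this shows $\mathcal{H}$ is an asymptotically near-optimal periodic $(q^3,q,q-1,2q)$-QCSS, completing the proof. The only real care needed is the bookkeeping: confirming that the degenerate $\tau=0$ subcases (linear or constant $g_0$) never push the magnitude past $q$, and verifying that the value $2q$ is genuinely realized so that $\vartheta_{\max}$ is not overstated — the root count itself being immediate for a plain cubic, in contrast to the $x^{p^k+1}$-type polynomial of Theorem~\ref{444}.
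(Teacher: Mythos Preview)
Your proof is correct and follows essentially the same approach as the paper: expand the correlation into a double character sum, reduce it via orthogonality to a root-counting statement for the polynomial $g_\tau$, split into the cases $\tau=0$ (degree $\le 2$) and $\tau\ne0$ (degree $3$), and exhibit a split cubic to show $\vartheta_{\max}=2q$ is attained. Your derivation is actually a bit cleaner than the paper's, since you compress the case analysis into the single identity $R=(N_{g_\tau}-1)q$ rather than treating the subcases separately.
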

\begin{proof}
For any two  complementary sequences $\mathbf{H}^{a_{1},b_{1},c_{1}}$, $\mathbf{H}^{a_{2},b_{2},c_{2}}$ of $\mathcal{H}$ and $\tau\in[0,q-2]$, we have
\begin{eqnarray*}
\nonumber &
&R_{\mathbf{H}^{a_{1},b_{1},c_{1}},\mathbf{H}^{a_{2},b_{2},c_{2}}}(\tau)\\
\nonumber 
&=&\sum_{l=0}^{q-1}R_{\mathbf{h}_{l}^{a_{1},b_{1},c_{1}},\mathbf{h}_{l}^{a_{2},b_{2},c_{2}}}(\tau)
\\
\nonumber 
&=&\sum_{l=0}^{q-1}\sum_{t=0}^{q-2}\chi_{1}\left(\alpha^{t}(d_l^{3}+a_{1}d_{l}^{2}+b_{1}d_{l}+c_1)\right)\overline{\chi_{1}}\left(\alpha^{t+\tau}(d_l^{3}+a_{2}d_{l}^{2}+b_2d_l+c_2)\right)
\\
&=&\sum_{l=0}^{q-1}\sum_{t=0}^{q-2}\chi_{1}\left((1-\alpha^{\tau})\alpha^{t}d_{l}^{3}+(a_1-a_2\alpha^{\tau})\alpha^{t}d_{l}^{2}
+(b_1-b_2\alpha^{\tau})\alpha^{t}d_l+(c_1-c_2\alpha^{\tau})\alpha^{t}\right),
\end{eqnarray*}
where $a_1, a_2, b_1, b_2, c_1, c_2\in \gf_q$.
We consider the following cases to determine its value distribution.

{Case 1}: If $\tau=0$, then 
\begin{eqnarray}\label{eqq1}
\nonumber &
&R_{\mathbf{H}^{a_{1},b_{1},c_{1}},\mathbf{H}^{a_{2},b_{2},c_{2}}}(\tau)\\
\nonumber 
&=&\sum_{l=0}^{q-1}\sum_{t=0}^{q-2}\chi_{1}\left((a_1-a_2)\alpha^td_l^{2}+(b_1-b_2)\alpha^td_l+(c_1-c_2)\alpha^t\right)
\\
&=&\sum_{x\in\gf_q}\sum_{y\in\gf_q^{*}}\chi_{1}\left((a_1-a_2)yx^{2}+(b_1-b_2)yx+(c_1-c_2)y\right).
\end{eqnarray}
Subcase 1.1: If $a_1-a_2=0$ in Equation (\ref{eqq1}), by the orthogonal relation of the additive characters, we have
\begin{eqnarray*}
\nonumber &
&R_{\mathbf{H}^{a_{1},b_{1},c_{1}},\mathbf{H}^{a_{2},b_{2},c_{2}}}(\tau)\\
\nonumber 
&=&\sum_{x\in\gf_q}\sum_{y\in\gf_q^{*}}\chi_{1}((b_1-b_2)yx+(c_1-c_2)y)\\
&=&\left\{
\begin{array}{ll}
-q & \mbox{if $b_1=b_2$ and $c_1 \neq c_2$},\\
0 & \mbox{if $b_1\neq b_2$ and $c_1 = c_2$},\\
0 &  \mbox{if $b_1\neq b_2$ and $c_1 \neq c_2$}.
\end{array}\right.
\end{eqnarray*}
Subcase 1.2: If $a_1-a_2\neq0$, then
\begin{eqnarray*}
\nonumber &
&R_{\mathbf{H}^{a_{1},b_{1},c_{1}},\mathbf{H}^{a_{2},b_{2},c_{2}}}(\tau)\\
\nonumber 
&=&\sum_{x\in\gf_q}\sum_{y\in\gf_q^{*}}\chi_{1}((a_1-a_2)yx^{2}+(b_1-b_2)yx+(c_1-c_2)y)\\
&=&\sum_{x\in\gf_q}\sum_{y\in\gf_q^{*}}\chi_{1}\left(\left((a_1-a_2)x^{2}+(b_1-b_2)x+c_1-c_2\right)y\right).
\end{eqnarray*}
Now we consider the zeros of the polynomial $g(x)=(a_1-a_2)x^{2}+(b_1-b_2)x+c_1-c_2$ in $\gf_q$ with $a_1-a_2\neq0$, where $p$ is an odd prime. Let $N_g$ denote the number of roots of $g(x)=0$ in $\gf_q$. It is easy to know that $N_g\in \{0, 1, 2\}$. By the orthogonal relation of the additive characters, we have
\begin{eqnarray}
\nonumber &
&R_{\mathbf{H}^{a_{1},b_{1},c_{1}},\mathbf{H}^{a_{2},b_{2},c_{2}}}(\tau)\\
\nonumber 
&=&\left\{
\begin{array}{ll}
-q & \mbox{if $N_f=0$},\\
0 & \mbox{if $N_f=1$},\\
q &  \mbox{if $N_f=2$}.
\end{array}\right.
\end{eqnarray}

{Case 2}: If $\tau\neq0$, then $1-\alpha^{\tau}\neq0$. Thus we have
\begin{eqnarray*}
\nonumber &
&R_{\mathbf{H}^{a_{1},b_{1},c_{1}},\mathbf{H}^{a_{2},b_{2},c_{2}}}(\tau)\\
\nonumber 
&=&\sum_{x\in\gf_q}\sum_{y\in\gf_q^{*}}\chi_{1}\left((1-\alpha^{\tau})yx^{3}+(a_1-a_2\alpha^{\tau})yx^{2}+(b_1-b_2\alpha^{\tau})yx+(c_1-c_2\alpha^{\tau})y\right)\\
&=&\sum_{x\in\gf_q}\sum_{y\in\gf_q^{*}}\chi_{1}\left(\left((1-\alpha^{\tau})x^{3}+(a_1-a_2\alpha^{\tau})x^{2}+(b_1-b_2\alpha^{\tau})x+c_1-c_2\alpha^{\tau}\right)y\right).
\end{eqnarray*}
Now we consider the zeros of the polynomial $f(x)=(1-\alpha^{\tau})x^{3}+(a_1-a_2\alpha^{\tau})x^{2}+(b_1-b_2\alpha^{\tau})x+c_1-c_2\alpha^{\tau}$ in $\gf_q$ with $1-\alpha^{\tau}\neq0$, where $p$ is an odd prime. Let $N_f$ denote the number of zeros of $f$ in $\gf_q$. It is known that $N_f \in \{0, 1, 2, 3\}$.
Note that the each of the coefficients $a_1-a_2\alpha^{\tau}$, $b_1-b_2\alpha^{\tau}$ and $c_1-c_2\alpha^{\tau}$ 
runs over $\gf_q$ when all of $a_1,a_1,b_1,b_2,c_1,c_2$ run over $\gf_q$.
Note that there exists a polynomial $f(x)=(1-\alpha^{\tau})(x-x_1)(x-x_2)(x-x_3)$ for pairwise distinct elements $x_1,x_2,x_3\in \gf_q$ which has three zeros in $\gf_q$. 
By the orthogonal relation of additive characters, we then have
\begin{eqnarray}
\nonumber &
&R_{\mathbf{H}^{a_{1},b_{1},c_{1}},\mathbf{H}^{a_{2},b_{2},c_{2}}}(\tau)\\
\nonumber 
&=&\left\{
\begin{array}{ll}
-q & \mbox{if $N_f=0$},\\
0 & \mbox{if $N_f=1$},\\
q &  \mbox{if $N_f=2$},\\
2q & \mbox{if $N_f=3$}.
\end{array}\right.
\end{eqnarray}
Based on all the cases discussed above, the maximum periodic correlation magnitude of $\mathcal{H}$ is $2q$. Thus $\mathcal{H}$ is a  periodic $(q^3, q, q-1,2q)$-QCSS. According to the correlation lower bound in (\ref{wlech}), we have 
\begin{eqnarray*}
\vartheta_{\opt}=q(q-1)\sqrt{\frac{q^2-1}{q^3(q-1)-1}}=q\sqrt{\frac{(q-1)^{3}(q+1)}{q^3(q-1)-1}}.
\end{eqnarray*}
It is easy to see that 
\begin{eqnarray*}
\lim_{q\rightarrow+\infty}\frac{\vartheta_{\max}}{\vartheta_{\opt}}= \lim_{q\rightarrow+\infty}\frac{2q}{q\sqrt{\frac{(q-1)^{3}(q+1)}{q^3(q-1)-1)}}}=2.
\end{eqnarray*}
Then $\mathcal{H}$ is asymptotically near-optimal with respect to the correlation lower bound in (\ref{wlech}). This completes the proof of this theorem.\\
\end{proof}
\begin{example}\label{example6}
Let $p=3$ and $n=2$. Then the parameters of the QCSS $\mathcal{H}$ constructed in Theorem \ref{333} are $(729, 9, 8, 18)$ and its alphabet is given by $\{e^{2\pi\sqrt{-1}i/3}: i\in[0,2]\}$. By Magma program, we verify that the periodic correlation function $R_{\mathbf{H}^{a_{1},b_{1},c_{1}},\mathbf{H}^{a_{2},b_{2},c_{1}}}(\tau)$ is equal to $0$, $9$ or $18$ for any $a,b,c\in \gf_q$ and $0\leq\tau\leq7$ except the trivial case that $a_1=a_2, b_1=b_2,c_1=c_2$ and $\tau=0$. For instance, the matrices $\mathbf{H}^{1,\alpha,\alpha}$ and $\mathbf{H}^{1,\alpha,\alpha^2}$ in $\mathcal{H}$ are presented as follows, where each entry stands for a power of $\zeta_3=e^{2\pi\sqrt{-1}/3}$.
By Python program, we show the autocorrelation magnitude distribution of $\mathbf{H}^{1,\alpha,\alpha}$ in Fig. $10$,  the autocorrelation magnitude
distribution of $\mathbf{H}^{1,\alpha,\alpha^2}$ in Fig. $11$, and the correlation magnitude distribution of $\mathbf{H}^{1,\alpha,\alpha}$ and $\mathbf{H}^{1,\alpha,\alpha^2}$ in Fig. $12$. We remark that the upper bound of $\vartheta_{\max}$ in Theorem \ref{333} is tight in this case. 
\begin{eqnarray*}\label{matrix6}
\mathbf{h}^{1,\alpha,\alpha}=\left[
\begin{array}{cccc}
2 2 1 0 1 1 2 0\\
0 2 2 1 0 1 1 2\\
2 2 1 0 1 1 2 0\\
1 1 2 0 2 2 1 0\\
0 0 0 0 0 0 0 0\\
2 2 1 0 1 1 2 0\\
2 0 2 2 1 0 1 1\\
0 2 2 1 0 1 1 2\\
0 1 1 2 0 2 2 1\\
\end{array}\right],\
\mathbf{h}^{1,\alpha,\alpha^2}=\left[
\begin{array}{cccc}
0 2 2 1 0 1 1 2\\
1 2 0 2 2 1 0 1\\
0 2 2 1 0 1 1 2\\
2 1 0 1 1 2 0 2\\
1 0 1 1 2 0 2 2\\
0 2 2 1 0 1 1 2\\
0 0 0 0 0 0 0 0\\
1 2 0 2 2 1 0 1\\
1 1 2 0 2 2 1 0\\
\end{array}\right].
\end{eqnarray*}
\begin{figure*}[htbp]
\centering
\includegraphics[width=0.6\columnwidth,height=0.4\linewidth]{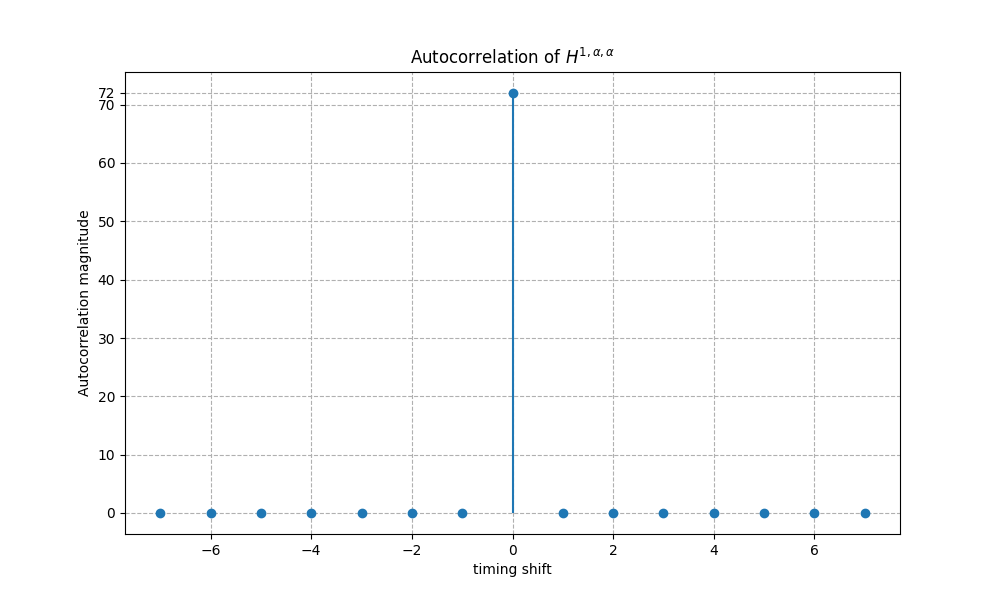}
\caption{The autocorrelation magnitude distribution of $\mathbf{H}^{1,\alpha,\alpha}$ in Example \ref{example6} }
\label{fig9.1}
\end{figure*}
\begin{figure*}[htbp]
\centering
\includegraphics[width=0.6\columnwidth,height=0.4\linewidth]{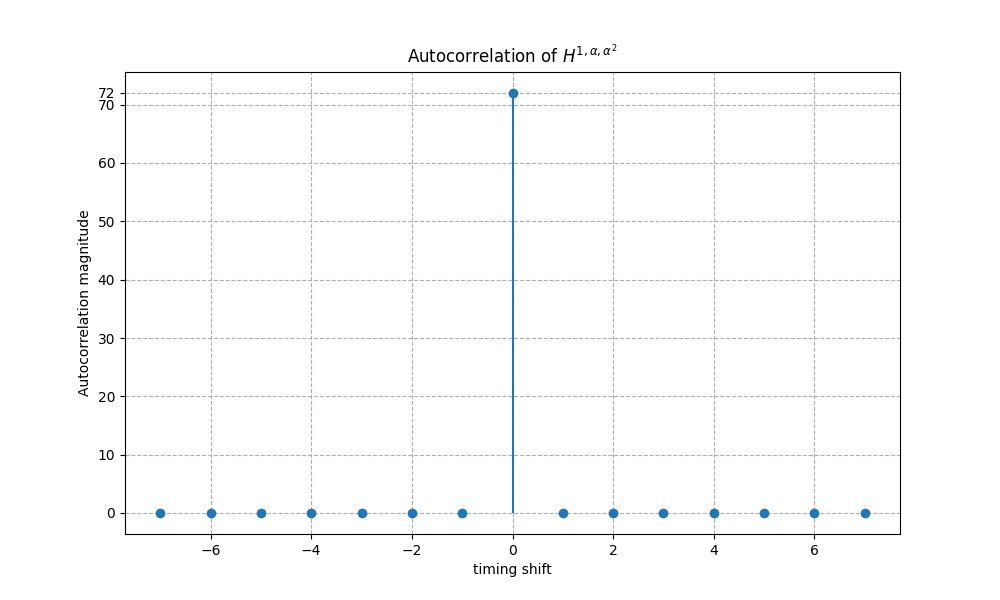}
\caption{The autocorrelation magnitude distribution of $\mathbf{H}^{1,\alpha,\alpha^2}$ in Example \ref{example6}}
\label{fig9.2}
\end{figure*}
\begin{figure*}[htbp]
\centering
\includegraphics[width=0.6\columnwidth,height=0.4\linewidth]{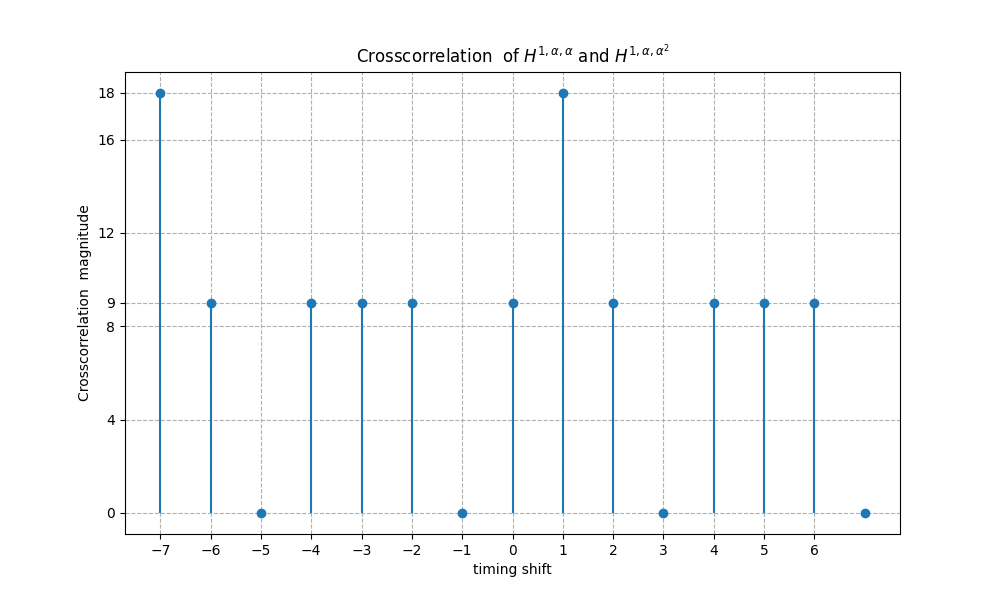}
\caption{ The correlation magnitude  distribution of $\mathbf{H}^{1,\alpha,\alpha}$ and $\mathbf{H}^{1,\alpha,\alpha^2}$ in Example \ref{example6}}
\label{fig9.3}
\end{figure*}
\end{example}

\subsection{The fourth construction of periodic QCSSs from Gaussian sums}

Let $q=p^{n}$ with a prime $p$ and  a positive integer $n$. Let $\alpha$ be a primitive element of $\gf_q$.  Assume that $d_0, d_1,\cdots, d_{q-2}$ are all the elements of $\gf_q^*$, i.e.  $\gf_q^*=\{d_0, d_1,\cdots, d_{q-2}\}$.
Let $\chi_{b}$, $b\in \gf_q$, and $\varphi_{i} $, $0\leq i \leq q-2$, be  additive character and  multiplicative characters of $\gf_q$ defined in Subsection \ref{subsection}, respectively.

For each $b\in \gf_q$ and $0\leq i \leq q-2$, we define a constituent sequence $\mathbf{s}_{l}^{i,b}$ of period $q-1$ as 
\begin{eqnarray*}
\mathbf{s}_{l}^{i,b}=\left(\mathbf{s}_{l}^{i,b}(t)\right)_{t=0}^{q-2}, ~\text{where}\ \mathbf{s}_{l}^{i,b}=\varphi_{i}(d_l)\chi_{b}(\alpha^{t}d_l)\chi_{1}({\alpha^{t}})~\text{for}\ 0\leq l \leq q-2.
\end{eqnarray*}
Then we obtain a two-dimensional $(q-1)\times(q-1)$ matrix
\begin{eqnarray*}
\mathbf{S}^{i,b}=\left[
\begin{array}{cccc}
\mathbf{s}_{0}^{i,b}\\
\mathbf{s}_{1}^{i,b}\\
\vdots,\\
\mathbf{s}_{q-2}^{i,b}
\end{array}\right].
\end{eqnarray*}
The set 
\begin{eqnarray}\label{eq-c2}
\mathcal{S}=\left\{\mathbf{S}^{i,b}: 0\leq i \leq q-2, b\in \gf_q\right\}
\end{eqnarray}
is a complementary sequence set.
\begin{theorem}\label{555}
Let $q=p^n$ with a prime $p$ and a  positive integer $n$. Let $\mathcal{S}$ be the complementary sequence set
defined in (\ref{eq-c2}). Then $\mathcal{S}$ is a periodic $(q^2-q, q-1, q-1, q)$-QCSS with alphabet size $p(p^n-1)$ which is asymptotically optimal with respect
to the correlation lower bound in (\ref{wlech}).
\end{theorem}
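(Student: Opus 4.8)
The plan is to obtain a closed form for the periodic correlation $R_{\mathbf{S}^{i_1,b_1},\mathbf{S}^{i_2,b_2}}(\tau)$ and then read off its magnitude. First I would expand the definition and collapse the additive characters using $\chi_b(x)=\chi_1(bx)$, $\overline{\chi_1(x)}=\chi_1(-x)$, and $\varphi_{i_1}\overline{\varphi_{i_2}}=\varphi_{i_1-i_2}$ (indices modulo $q-1$):
\begin{eqnarray*}
R_{\mathbf{S}^{i_1,b_1},\mathbf{S}^{i_2,b_2}}(\tau)
&=&\sum_{l=0}^{q-2}\sum_{t=0}^{q-2}\varphi_{i_1}(d_l)\overline{\varphi_{i_2}(d_l)}\,\chi_{b_1}(\alpha^{t}d_l)\overline{\chi_{b_2}(\alpha^{t+\tau}d_l)}\,\chi_1(\alpha^{t})\overline{\chi_1(\alpha^{t+\tau})}\\
&=&\sum_{l=0}^{q-2}\varphi_{i_1-i_2}(d_l)\sum_{t=0}^{q-2}\chi_1\!\left(\alpha^{t}\big[(b_1-b_2\alpha^{\tau})d_l+(1-\alpha^{\tau})\big]\right).
\end{eqnarray*}
Writing $c_l=(b_1-b_2\alpha^{\tau})d_l+(1-\alpha^{\tau})$ and noting that $\alpha^{t}$ runs over $\gf_q^{*}$ as $t$ runs over $0,\dots,q-2$, the orthogonal relation of additive characters gives $\sum_{t}\chi_1(\alpha^{t}c_l)=q-1$ if $c_l=0$ and $-1$ otherwise; splitting the $l$-sum accordingly and adding back the full sum yields
\begin{eqnarray*}
R_{\mathbf{S}^{i_1,b_1},\mathbf{S}^{i_2,b_2}}(\tau)=q\sum_{\substack{0\le l\le q-2\\ c_l=0}}\varphi_{i_1-i_2}(d_l)-\sum_{l=0}^{q-2}\varphi_{i_1-i_2}(d_l).
\end{eqnarray*}
By the orthogonal relation of multiplicative characters, the second sum equals $\sum_{x\in\gf_q^{*}}\varphi_{i_1-i_2}(x)$, which is $q-1$ when $i_1=i_2$ and $0$ otherwise.

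The remaining step is to count the indices $l$ with $c_l=0$, which I would split into three situations according to the map $d_l\mapsto c_l$. If $\tau=0$ and $b_1=b_2$, then $c_l\equiv 0$, so all $q-1$ indices contribute; this covers the trivial term $R_{\mathbf{S}^{i,b},\mathbf{S}^{i,b}}(0)=(q-1)^2$, while for $i_1\neq i_2$ the contribution is $0$. If $d_l\mapsto c_l$ is a nonzero constant (that is, $\tau=0,\ b_1\neq b_2$, or $\tau\neq 0,\ b_1=b_2\alpha^{\tau}$), no index contributes, so $R\in\{0,-(q-1)\}$. If $\tau\neq 0$ and $b_1\neq b_2\alpha^{\tau}$, then $c_l=0$ has the single root $d_{l_0}=(\alpha^{\tau}-1)/(b_1-b_2\alpha^{\tau})$, which is nonzero since $1-\alpha^{\tau}\neq 0$; hence exactly one index $l_0$ contributes and $R_{\mathbf{S}^{i_1,b_1},\mathbf{S}^{i_2,b_2}}(\tau)$ equals $q\varphi_{i_1-i_2}(d_{l_0})-(q-1)=1$ when $i_1=i_2$, and equals $q\varphi_{i_1-i_2}(d_{l_0})$, of magnitude $q$, when $i_1\neq i_2$. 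Collecting all cases, every nontrivial periodic correlation of $\mathcal{S}$ lies in $\{0,\,1,\,-(q-1)\}\cup\{\,q\varphi_{i_1-i_2}(d_{l_0})\,\}$, so the maximum nontrivial correlation magnitude is $q$ (the value $q$ being attained through the last case once $q\ge 3$, where distinct $i_1,i_2$ exist).

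Finally, the entries $\varphi_i(d_l)\chi_b(\alpha^{t}d_l)\chi_1(\alpha^{t})$ are products of a $(q-1)$-th root of unity and $p$-th roots of unity; since $\gcd(p,q-1)=1$ they are $p(q-1)$-th roots of unity, and all such roots occur, giving alphabet size $p(p^n-1)$. For optimality, with $M=q^2-q$ and $K=N=q-1$ (so $M/K=q$), the bound (\ref{wlech}) gives $\vartheta_{\opt}=(q-1)^2\sqrt{\frac{q-1}{q(q-1)^2-1}}$, and one computes $\lim_{q\to\infty}\vartheta_{\max}/\vartheta_{\opt}=\lim_{q\to\infty}\frac{q}{(q-1)^2}\sqrt{\frac{q(q-1)^2-1}{q-1}}=1$, so $\mathcal{S}$ is asymptotically optimal. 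There is no serious obstacle here; the only point requiring care is the bookkeeping in the root count for $c_l=0$ and verifying that the extremal magnitude $q$ is genuinely attained.
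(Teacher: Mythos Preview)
Your argument is correct and reaches the same conclusions as the paper, but by a genuinely different and more elementary route. The paper expands the double sum, separates into five cases, and in the decisive case ($\tau\neq 0$, $i_1\neq i_2$, $b_1\neq b_2\alpha^{\tau}$) factors the expression into a product of two Gaussian sums, then invokes $G(\varphi,\chi_1)G(\overline{\varphi},\chi_1)=\varphi(-1)q$ (Lemma~\ref{quadGuasssum2}) to pin down the magnitude $q$. You instead collapse the inner sum via additive orthogonality to get $R=q\sum_{c_l=0}\varphi_{i_1-i_2}(d_l)-\sum_{l}\varphi_{i_1-i_2}(d_l)$ and then simply count the roots of the affine equation $c_l=0$; this sidesteps Gaussian sums entirely and yields the same value $q\varphi_{i_1-i_2}(d_{l_0})$ directly (indeed, unwinding the paper's expression $\overline{\varphi_{i_1-i_2}}(b_1-b_2\alpha^{\tau})\varphi_{i_1-i_2}(1-\alpha^{\tau})\cdot q\varphi_{i_1-i_2}(-1)$ gives exactly your $q\varphi_{i_1-i_2}(d_{l_0})$). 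Your approach is shorter and needs no machinery beyond orthogonality, while the paper's route makes the ``Gaussian sums'' in the title explicit. One small wording slip: in the case $\tau=0$, $b_1\neq b_2$ the map $d_l\mapsto c_l=(b_1-b_2)d_l$ is not a nonzero constant but a bijection of $\gf_q^{*}$ onto itself; your conclusion that no $l$ satisfies $c_l=0$ is nonetheless correct because $d_l\neq 0$. Finally, you also verify the alphabet size, which the paper states without justification.
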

\begin{proof}
For any two  complementary sequences $\mathbf{S}^{i_1,b_{1}}$, $\mathbf{S}^{i_2,b_{2}}$ in $\mathcal{S}$ and $\tau\in[0,q-2]$, we have
\begin{eqnarray*}
\nonumber &
&R_{\mathbf{S}^{i_1,b_{1}},\mathbf{S}^{i_2,b_{2}}}(\tau)\\
\nonumber 
&=&\sum_{l=0}^{q-2}R_{\mathbf{s}_{l}^{i_1,b_{1}},\mathbf{s}_{l}^{i_2,b_{2}}}(\tau)\\
&=&\sum_{l=0}^{q-2}\sum_{t=0}^{q-2}\varphi_{i_1}(d_l)\chi_{b_1}(d_{l}\alpha^{t})\chi_{1}(\alpha^{t}) \overline{\varphi_{i_2}}(d_l)\overline{\chi_{b_2}}(d_{l}\alpha^{t+\tau})\overline{\chi_{1}}(\alpha^{t+\tau})\\
&=&\sum_{l=0}^{q-2}\sum_{t=0}^{q-2}\varphi_{i_1-i_2}(d_l)\chi_{1}\left((b_{1}-b_{2}\alpha^{\tau})\alpha^{t}d_{l}\right)\chi_{1}\left((1-\alpha^{\tau})\alpha^{t}\right),
\end{eqnarray*}
where  $0\leq i_1 \leq q-2, 0\leq i_2 \leq q-2, b_1,b_2\in \gf_q$.
Then we consider the following cases to determine the value distribution of $R_{\mathbf{S}^{i_1,b_{1}},\mathbf{S}^{i_2,b_{2}}}(\tau)$.

{Case 1}: If $\tau =0$, $i_1=i_2$ and $b_{1}\neq b_{2}$, then
\begin{eqnarray*}
\nonumber &
&R_{\mathbf{S}^{i_1,b_{1}},\mathbf{S}^{i_2,b_{2}}}(\tau)\\
&=&\sum_{l=0}^{q-2}\sum_{t=0}^{q-2}\chi_{1}\left((b_{1}-b_{2})\alpha^{t}d_{l}\right)\\
&=&\sum_{l=0}^{q-2}\sum_{x\in \gf_q^*}\chi_{1}\left((b_{1}-b_{2})d_{l}x\right).
\end{eqnarray*}
By the orthogonal relation of the additive characters, we have
\begin{eqnarray*}
R_{\mathbf{S}^{i_1,b_{1}},\mathbf{S}^{i_2,b_{2}}}(\tau)=-(q-1).
\end{eqnarray*}

{Case 2}: If $\tau =0$, $i_1\neq i_2$ and $b_{1}=b_{2}$, then
\begin{eqnarray*}
\nonumber &
&R_{\mathbf{S}^{i_1,b_{1}},\mathbf{S}^{i_2,b_{2}}}(\tau)\\
&=&\sum_{l=0}^{q-2}\sum_{t=0}^{q-2}\varphi_{i_1-i_2}(d_{l})\\
&=&(q-1)\sum_{y\in \gf_q^*}\varphi_{i_1-i_2}(y)=0
\end{eqnarray*}
by the orthogonal relation of the multiplicative characters.

{Case 3}: If $\tau =0$, $i_1\neq i_2$ and $b_{1}\neq b_{2}$, then
\begin{eqnarray*}
\nonumber &
&R_{\mathbf{S}^{i_1,b_{1}},\mathbf{S}^{i_2,b_{2}}}(\tau)\\
&=&\sum_{l=0}^{q-2}\sum_{t=0}^{q-2}\varphi_{i_1-i_2}(d_{l})\chi_{1}((b_{1}-b_{2})\alpha^{t}d_{l})\\
&=&\sum_{y\in \gf_q^*}\varphi_{i_1-i_2}(y)\sum_{x\in \gf_q^*}\chi_{1}((b_{1}-b_{2})yx)\\
&=&-\sum_{y\in \gf_q^*}\varphi_{i_1-i_2}(y)=0
\end{eqnarray*}
by the orthogonal relation of the additive and multiplicative characters.

{Case 4}: If $\tau \neq0$ and $i_1= i_2$, then
\begin{eqnarray*}
\nonumber &
&R_{\mathbf{S}^{i_1,b_{1}},\mathbf{S}^{i_2,b_{2}}}(\tau)\\
&=&\sum_{l=0}^{q-2}\sum_{t=0}^{q-2}\chi_{1}((b_{1}-b_{2}\alpha^{\tau})\alpha^{t}d_{l})\chi_{1}((1-\alpha^{\tau})\alpha^{t})\\
&=&\sum_{x\in \gf_q^*}\chi_{1}((1-\alpha^{\tau})x)\sum_{y\in \gf_q^*}\chi_{1}((b_{1}-b_{2}\alpha^{\tau})xy)\\
&=&\left\{
\begin{array}{ll}
(q-1)\sum_{x\in \gf_q^*}\chi_{1}((1-\alpha^{\tau})x) & \text{if}~ b_{1}-b_{2}\alpha^{\tau}=0\\
-\sum_{x\in \gf_q^*}\chi_{1}((1-\alpha^{\tau})x) & \text{if}~ b_{1}-b_{2}\alpha^{\tau}\neq0
\end{array}\right.\\
&=&\left\{
\begin{array}{ll}
-(q-1) & \text{if}~ b_{1}-b_{2}\alpha^{\tau}=0,\\
1 & \text{if}~ b_{1}-b_{2}\alpha^{\tau}\neq0,
\end{array}\right.
\end{eqnarray*}
according to the orthogonal relation of the additive characters.

{Case 5}: If $\tau \neq0$ and $i_1\neq i_2$, then
\begin{eqnarray*}
\nonumber &
&R_{\mathbf{S}^{i_1,b_{1}},\mathbf{S}^{i_2,b_{2}}}(\tau)\\
&=&\sum_{l=0}^{q-2}\sum_{t=0}^{q-2}\varphi_{i_1-i_2}(d_l)\chi_{1}((b_{1}-b_{2}\alpha^{\tau})\alpha^{t}d_{l})\chi_{1}((1-\alpha^{\tau})\alpha^{t})\\
&=&\sum_{x\in \gf_q^*}\sum_{y\in \gf_q^*}\varphi_{i_1-i_2}(y)\chi_{1}((b_{1}-b_{2}\alpha^{\tau})xy)\chi_{1}((1-\alpha^{\tau})x)\\
&=&\left\{
\begin{array}{ll}
\sum\limits_{y\in \gf_q^*}\varphi_{i_1-i_2}(y)\sum\limits_{x\in \gf_q^*}\chi_{1}((1-\alpha^{\tau})x) & \text{if}~ b_{1}-b_{2}\alpha^{\tau}=0\\
\substack{\mbox{$\sum\limits_{x\in \gf_q^*}\chi_{1}((1-\alpha^{\tau})x)\overline{\varphi_{i_1-i_2}}((b_{1}-b_{2}\alpha^{\tau})x)\cdot$}\\
\mbox{$\sum\limits_{y\in \gf_q^*}\varphi_{i_1-i_2}((b_{1}-b_{2}\alpha^{\tau})xy)\chi_{1}((b_{1}-b_{2}\alpha^{\tau})xy)$}} & \text{if}~ b_{1}-b_{2}\alpha^{\tau}\neq0
\end{array}\right.\\
&=&\left\{
\begin{array}{ll}
0 & \text{if}~ b_{1}-b_{2}\alpha^{\tau}=0,\\
\overline{\varphi_{i_1-i_2}}(b_{1}-b_{2}\alpha^{\tau})\varphi_{i_1-i_2}(1-\alpha^{\tau})G(\varphi_{i_1-i_2},\chi_{1}) G(\overline{\varphi_{i_1-i_2}},\chi_{1})& \text{if}~ b_{1}-b_{2}\alpha^{\tau}\neq0,
\end{array}\right.
\end{eqnarray*}
by the orthogonal relation of the additive and multiplicative characters and the definition of Gaussian sums.
It is known that $G(\overline{\varphi_{i_1-i_2}},\chi_{1})=\varphi_{i_1-i_2}(-1)\overline{G(\varphi_{i_1-i_2},\chi_{1})}$.
Then $$G(\varphi_{i_1-i_2},\chi_{1}) G(\overline{\varphi_{i_1-i_2}},\chi_{1})=\varphi_{i_1-i_2}(-1)|G(\varphi,\chi_1)|^2=q\varphi_{i_1-i_2}(-1)$$
by Lemma \ref{quadGuasssum2}. Thus
\begin{eqnarray*}
|R_{\mathbf{C}^{a_{1},b_{1}},\mathbf{C}^{a_{2},b_{2}}}(\tau)|\in \{0,  \pm q\}.
\end{eqnarray*}

Based on the above discussions, we deduce that the maximum periodic correlation magnitude of $\mathcal{S}$ is $q$.
Then $\mathcal{S}$ is a periodic $(q^{2}-q, q-1, q-1, q)$-QCSS. According to the correlation bound in (\ref{wlech}), we have 
\begin{eqnarray*}
\vartheta_{\opt}=\sqrt{\frac{(q-1)^{5}}{q(q-1)^2-1}}.
\end{eqnarray*}
It is obvious that 
\begin{eqnarray*}
\lim_{q\rightarrow+\infty}\frac{\vartheta_{\max}}{\vartheta_{\opt}}=\lim_{q\rightarrow+\infty}\frac{q}{\sqrt{\frac{(q-1)^{5}}{q(q-1)^2-1}}}=1.
\end{eqnarray*}
Then $\mathcal{S}$ is asymptotically optimal with respect to the correlation lower bound in  (\ref{wlech}).
This completes the proof of this theorem.
\end{proof}
\begin{example}\label{example4}
Let $p=2$ and $n=3$. Then the parameters of the QCSS $\mathcal{S}$ constructed in Theorem \ref{555} are $(56, 7, 7, 8)$.
By Magma program, the periodic correlation function $R_{\mathbf{S}^{i_1 ,b_{1}},\mathbf{S}^{i_2,b_{2}}}(\tau)$ is equal to $0$, $1$, $7$ or $8$ for any $b\in \gf_q$, $0\leq i \leq 6$ and $0\leq\tau\leq 6$ except the trivial case that $i_1=i_2, b_1=b_2$ and $\tau=0$. By Python program, we show the autocorrelation magnitude distribution of  $\mathbf{S}^{1,\alpha}$  in Fig. \ref{fig13},  the autocorrelation magnitude  distribution of $\mathbf{S}^{1,\alpha^2}$ in Fig. \ref{fig14}, and the correlation magnitude distribution of $\mathbf{S}^{1,\alpha}$ and $\mathbf{S}^{1,\alpha^2}$ in Fig. \ref{fig15}, respectively. These correlation magnitude distributions coincide with the results in Theorem \ref{555}.

\begin{figure*}[htbp]
\centering
\includegraphics[width=0.6\columnwidth,height=0.4\linewidth]{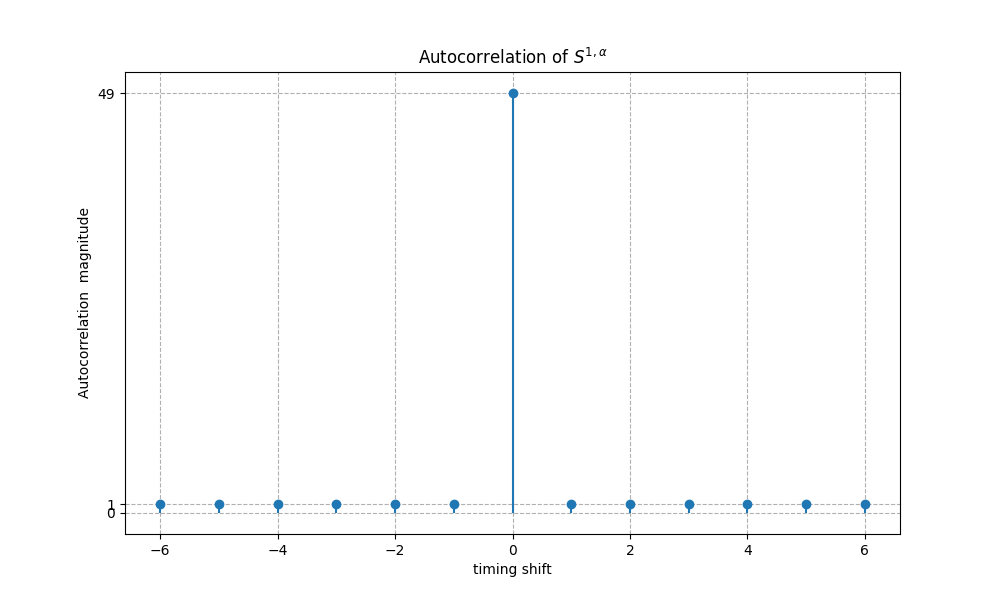}
\caption{The autocorrelation magnitude distribution of $\mathbf{S}^{1,\alpha}$ in Example \ref{example4}}
\label{fig13}
\end{figure*}
\begin{figure*}[htbp]
\centering
\includegraphics[width=0.6\columnwidth,height=0.4\linewidth]{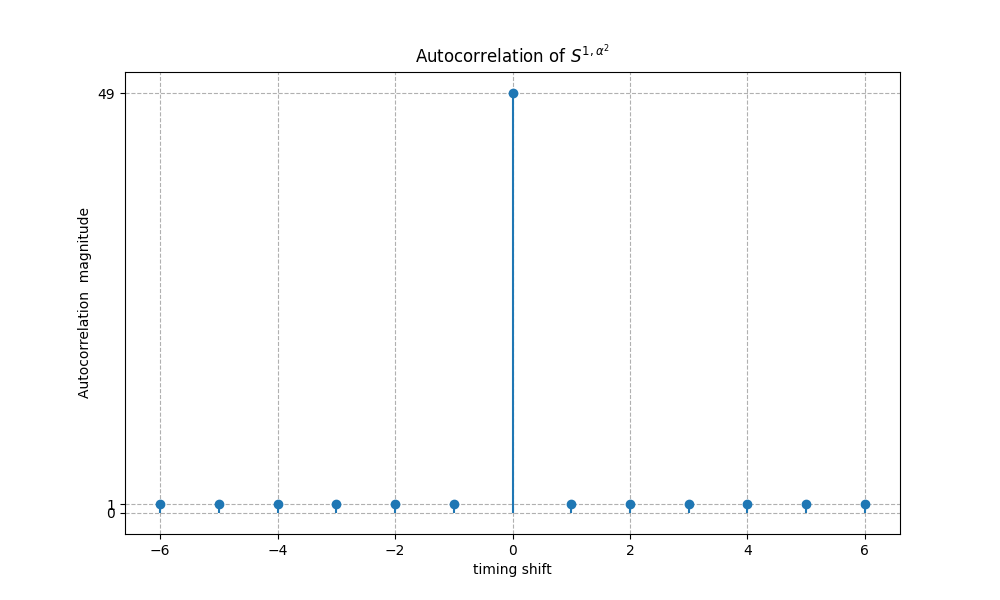}
\caption{The autocorrelation magnitude distribution  of $\mathbf{S}^{1,\alpha^2}$ in Example \ref{example4}}
\label{fig14}
\end{figure*}
\begin{figure*}[htbp]
\centering
\includegraphics[width=0.6\columnwidth,height=0.4\linewidth]{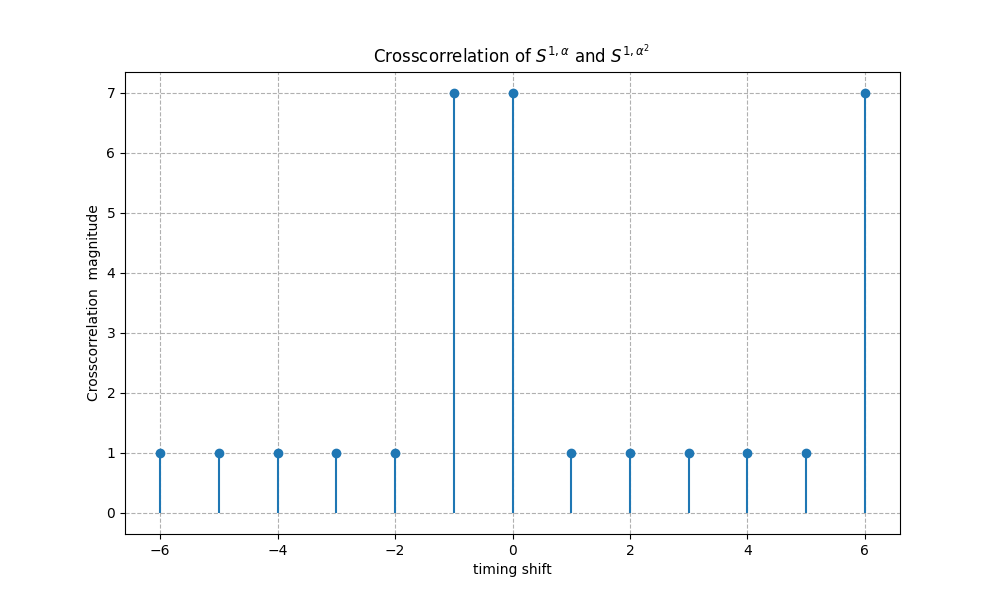}
\caption{The correlation magnitude distribution of $\mathbf{S}^{1,\alpha}$ and $\mathbf{S}^{1,\alpha^2}$ in Example \ref{example4}}
\label{fig15}
\end{figure*}
\end{example}

\section{Concluding remarks}
In this paper, we presented five new families of asymptotically optimal or near-optimal periodic QCSSs with large set sizes from some special polynomials and Gaussian sums.
Specially, we constructed the first two families of periodic QCSSs with set size $\Theta(K^3)$ and flock size $K$. 
Most of the QCSSs constructed in this paper have very small alphabet size. 
The  parameters of the QCSSs in this paper improve those of known ones in \cite{LY2, XLC}. 
From the figures in this paper, some complementary sequences have zero nontrivial autocorrelation magnitude.
These advantages indicate that our QCSSs have potential applications in MC-CDMA communication systems.

Note that the correlation lower bound in (\ref{wlech}) may not be tight if $M$ is too larger than $K$.
It is interesting to improve this bound. The reader is invited to study this interesting topic. 

\section*{Declarations}
\textbf{Conflict of interest} The authors declare that they have no conflicts of interest relevant to the content of this article.


\begin{thebibliography}{99}

\bibitem{B} A. W. Bluher, ``On $x^{q+1}+ax+b$,'' \emph{Finite Fields Appl.}, vol. 10, no. 3, pp. 285-305, 2004.

\bibitem{C1} H.-H. Chen, J.-F. Yeh, and N. Suehiro, ``A multicarrier CDMA architecture based on orthogonal complementary codes for new generations
of wideband wireless communications,'' \emph{IEEE Commun. Mag.}, vol. 39,
no. 10, pp. 126-135, 2001.

\bibitem{C2} H.-H. Chen, \emph{The Next Generation CDMA Technologies}. Hoboken, NJ,
USA: Wiley, 2007.


\bibitem{D} J. A. Davis, J. Jedwab, ``Peak-to-mean power control in OFDM,
Golay complementary sequences, and Reed-M$\ddot{u}$ller codes,'' \emph{IEEE Trans.
Inf. Theory}, vol. 45, no. 7, pp. 2397-2417, 1999.


\bibitem{F} P. Fan, W. Yuan, and Y. Tu, ``Z-complementary binary sequences,'' \emph{IEEE
Signal Process. Lett.}, vol. 14, no. 8, pp. 509-512, 2007.



\bibitem{HW1}
Z. Heng, X. Wang, ``New infinite families of near MDS codes holding $t$-designs,'' \emph{Dis. Math.}, vol. 346, no. 10,  113538, 2023.

\bibitem{KHM}
 K. H. Kim and S. Mesnager, "Solving $x^{2k+1} + x + a = 0$ in $\gf_2^{n}$with $\gcd(n, k)=1$," \emph{Finite Fields Appl.}, vol. 63, Art. no. 101630, 2020.
 
\bibitem{L} J. Li, A. Huang, M. Guizani, and H.-H. Chen, ``Inter-group complementary codes for interference-resistant CDMA wireless communications,''
\emph{IEEE Trans. Wireless Commun.}, vol. 7, no. 1, pp. 166-174, 2008.

\bibitem{LY1}
Y. Li, T. Liu, and C. Xu, "Constructions of asymptotically optimal quasi-complementary sequence sets," \emph{IEEE Commun. Lett.}, vol. 22, no. 8,
pp. 1516-1519, Aug. 2018.

\bibitem{LY2}
Y. Li, L. Tian, T. Liu, and C. Xu, "Two constructions of asymptotically
optimal quasi-complementary sequence sets," \emph{IEEE Trans. Commun.},
vol. 67, no. 3, pp. 1910-1924, Mar. 2019.

\bibitem{LY3}
Y. Li, L. Tian, T. Liu, and C. Xu, "Constructions of quasi-complementary
sequence sets associated with characters," \emph{IEEE Trans. Inf. Theory}, vol. 65, no. 7, pp. 4597-4608, Jul. 2019.

\bibitem{LY4}
 Y. Li, T. Yan, and C. Lv, "Construction of a near-optimal quasi-complementary sequence set from almost difference set," \emph{Cryptogr. Commun.}, vol. 11, no. 4,  pp. 815-824, 2019.
 
\bibitem{Liu}
Z. Liu, Y. Guan, B. C. Ng, and H. H. Chen, ``Correlation and set
size bounds of complementary sequences with low correlation zone,''
\emph{IEEE Trans. Commun.}, vol. 59, no. 12, pp. 3285-3289, 2011.

 \bibitem{ZL}
Z. Liu, U. Parampalli, Y. L. Guan, and S. Boztas, ``Constructions of optimal and near-optimal quasi-complementary sequence sets from singer difference sets,'' \emph{IEEE Wireless Commun. Lett.}, vol. 2, no. 5,pp. 487-490, Oct. 2013.

\bibitem{Liu2}
 Z. Liu, Y. L. Guan, and W. H. Mow, ``A tighter correlation lower
bound for quasi-complementary sequence sets,'' \emph{IEEE Trans. Inf. Theory},
vol. 60, no. 1, pp. 388-396, 2014.

\bibitem{LG}
G. Luo, X. Cao, M. Shi, and T. Helleseth, "Three new constructionsof asymptotically optimal periodic quasi-complementary sequence setswith small alphabet sizes," \emph{IEEE Trans. Inf. Theory}, vol. 67, no. 8,pp. 5168-5177, Aug. 2021.

\bibitem{Lidl}
R. Lidl, H. Niederreiter. \emph{Finite Fields}. Cambridge: Cambridge University Press, 1997.



\bibitem{P} A. Pezeshki, A. R. Calderbank, W. Moran, and S. D. Howard, ``Doppler
Resilient Golay Complementary Waveforms,'' \emph{IEEE Trans. Inf. Theory},
vol. 54, no. 9, pp. 4254-4266, 2008.


\bibitem{KP}
K. Pommerening. Quadratic Equations in Finite Fields of
Characteristic $2$. [Online]. Available: https://www.staff.uni-mainz.de/
pommeren/MathMisc/QuGlChar2.pdf.


\bibitem{S} P. Spasojevic and C. N. Georghiades, ``Complementary sequences for
ISI channel estimation,'' \emph{IEEE Trans. Inf. Theory}, vol. 47, no. 3,
pp. 1145-1152, 2001.

\bibitem{TL} C. C. Tseng, C. L. Liu, ``Complementary sets of sequences,'' \emph{IEEE Trans. Inf. Theory}, vol. IT-18, no. 5, pp. 644-652, 1972.


\bibitem{XLC}
H. Xiao, G. Luo and X. Cao, ``New Constructions of asymptotically optimal quasi-complementary sequence sets with small alphabet sizes," submitted to IEEE Trans. Commun., 2024.

\bibitem{XG}
 G. Xu, X. Cao and L. Qu, "Infinite families of 3-designs and 2-designs
from almost MDS codes," \emph{IEEE Trans. Inf. Theory}, vol. 68, no. 7,
pp. 4344-4353, Jul. 2022.
\end{thebibliography}
\end{document}